\def\eqref#1{equation~\ref{#1}}
\def\1{\bm{1}}
\def\vx{{\bm{x}}}
\def\vz{{\bm{z}}}
\def\mH{{\bm{H}}}
\def\mM{{\bm{M}}}
\def\mX{{\bm{X}}}
\def\mZ{{\bm{Z}}}
\DeclareMathAlphabet{\mathsfit}{\encodingdefault}{\sfdefault}{m}{sl}
\SetMathAlphabet{\mathsfit}{bold}{\encodingdefault}{\sfdefault}{bx}{n}
\def\gG{{\mathcal{G}}}
\def\gN{{\mathcal{N}}}
\newcommand{\E}{\mathbb{E}}
\newcommand{\R}{\mathbb{R}}
\def\X{{\mathcal{X}}}
\def\E{{\mathrm{E}}}
\def\OO{{\mathrm{O}}}
\def\SO{{\mathrm{SO}}}
\newcommand{\cmark}{\ding{51}} % Checkmark
\newcommand{\xmark}{\ding{55}} % X mark
\theoremstyle{plain}
\newtheorem{theorem}{Theorem}[section]
\newtheorem{lemma}[theorem]{Lemma}
\theoremstyle{definition}
\newtheorem{definition}[theorem]{Definition}
\theoremstyle{remark}
\newcommand{\model}{SymGNN\xspace}
\begin{document}

\twocolumn[
\icmltitle{Predicting and Interpreting Energy Barriers of Metallic Glasses with Graph Neural Networks}

% It is OKAY to include author information, even for blind
% submissions: the style file will automatically remove it for you
% unless you've provided the [accepted] option to the icml2024
% package.

% List of affiliations: The first argument should be a (short)
% identifier you will use later to specify author affiliations
% Academic affiliations should list Department, University, City, Region, Country
% Industry affiliations should list Company, City, Region, Country

% You can specify symbols, otherwise they are numbered in order.
% Ideally, you should not use this facility. Affiliations will be numbered
% in order of appearance and this is the preferred way.
\icmlsetsymbol{equal}{*}

\begin{icmlauthorlist}
\icmlauthor{Haoyu Li}{equal,1}
\icmlauthor{Shichang Zhang}{equal,1}
\icmlauthor{Longwen Tang}{1}
\icmlauthor{Mathieu Bauchy}{1}
\icmlauthor{Yizhou Sun}{1}
% \icmlauthor{Firstname6 Lastname6}{sch,yyy,comp}
% \icmlauthor{Firstname7 Lastname7}{comp}
%\icmlauthor{}{sch}
%\icmlauthor{}{sch}
\end{icmlauthorlist}

\icmlaffiliation{1}{University of California, Los Angeles, CA, USA}
% \icmlaffiliation{comp}{Department of Computer Science, University of California, Los Angeles, CA, USA}
% \icmlaffiliation{civil}{Department of Civil and Environmental Engineering, University of California, Los Angeles, CA, USA}

\icmlcorrespondingauthor{Haoyu Li}{haoyuli02@ucla.edu}
\icmlcorrespondingauthor{Shichang Zhang}{shichang@cs.ucla.edu}

% You may provide any keywords that you
% find helpful for describing your paper; these are used to populate
% the "keywords" metadata in the PDF but will not be shown in the document
\icmlkeywords{Machine Learning, ICML}

\vskip 0.3in
]

% this must go after the closing bracket ] following \twocolumn[ ...

% This command actually creates the footnote in the first column
% listing the affiliations and the copyright notice.
% The command takes one argument, which is text to display at the start of the footnote.
% The \icmlEqualContribution command is standard text for equal contribution.
% Remove it (just {}) if you do not need this facility.

%\printAffiliationsAndNotice{}  % leave blank if no need to mention equal contribution
\printAffiliationsAndNotice{\icmlEqualContribution} % otherwise use the standard text.

\begin{abstract}
Metallic Glasses (MGs) are widely used materials that are stronger than steel while being shapeable as plastic. While understanding the structure-property relationship of MGs remains a challenge in materials science, studying their energy barriers (EBs) as an intermediary step shows promise. In this work, we utilize Graph Neural Networks (GNNs) to model MGs and study EBs. We contribute a new dataset for EB prediction and a novel Symmetrized GNN (SymGNN) model that is E(3)-invariant in expectation. SymGNN handles invariance by aggregating over orthogonal transformations of the graph structure. 
When applied to EB prediction, SymGNN are more accurate than molecular dynamics (MD) local-sampling methods and other machine-learning models. Compared to precise MD simulations, SymGNN reduces the inference time on new MGs from roughly \textbf{41 days} to \textbf{less than one second}. We apply explanation algorithms to reveal the relationship between structures and EBs. 
The structures that we identify through explanations match the medium-range order (MRO) hypothesis and possess unique topological properties. Our work enables effective prediction and interpretation of MG EBs, bolstering material science research.
\footnote{Code for this project is available at \url{https://github.com/haoyuli02/SymGNN}}

% , and an optimal distribution over all 3D orthogonal transformations $\mathcal{O}_3$ is learned to maximize the benefit of invariance

% When applied for energy barrier prediction, we demonstrate that SymGNN is not only more efficient than traditional molecular-dynamics methods but also more scalable and can outperform other machine learning models.
\end{abstract}

\section{Introduction}\label{sec:introduction}
\begin{figure}[t]
\begin{center}
\includegraphics[clip, width=\columnwidth]{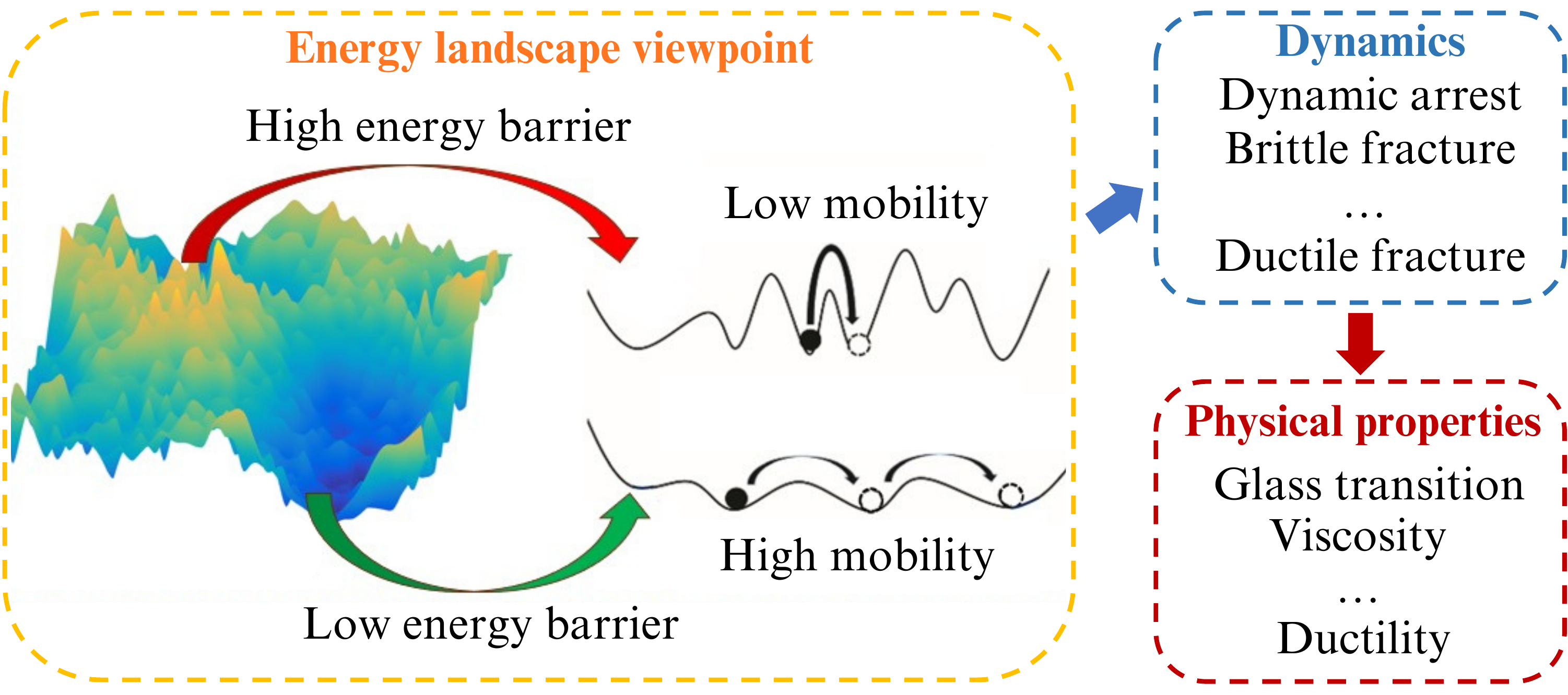}
\end{center}
\caption{EBs represent mobility, which can further influence the MG dynamics and their physical properties.}
% \vskip -0.2in
\label{fig:energy_barrier}
\end{figure}

Metallic glasses (MGs) combine good properties of metals and plastics in one material, making them stronger than steel while being shapeable as plastic~\cite{SCHROERS201114}. Their extensive applications span various industries including aerospace, sports equipment, luxury goods, biomedical devices, and many more~\cite{trexler2010mechanical}. The unique properties of MGs lie in their non-crystalline amorphous atomic structure, which sets them apart from the crystalline structure found in traditional metals~\cite{trexler2010mechanical, bansal2013handbook}. Despite extensive research on MGs, the details of their structure-property relationship are still not well understood~\cite{starr2002we, ding2014soft, patinet2016connecting, cao2018structural}. 

One promising approach for studying the structure-property relationship of MGs is through a special property called \textit{Energy Barrier (EB)}. EBs describe the local roughness of the energy landscape by comparing the average energy difference around an atom's local neighbors. Many studies have shown that understanding EBs can act as an important intermediary step for studying the MG physical properties~\cite{debenedetti2001supercooled, yu2012correlation, tang2021energy}. As shown in Figure~\ref{fig:energy_barrier}, EBs represent mobility, which can influence the MG dynamics and further their physical properties like glass transition and ductility~\cite{berthier2011theoretical, kirchner2022beyond}. However, the precise simulation of EBs is challenging and often requires time-consuming computation~\cite{barkema1996event, mousseau2012activation, jay2022activation}. For example, even with a high-performance computing (HPC) cluster and the advanced Activation-Relaxation Technique nouveau (ARTn)~\cite{cances2009some}, calculating EBs for an MG system with 3,000 atoms can take 41 days.

Given the usefulness and computational difficulty of EBs, we explore machine learning (ML) approaches to efficiently predict them from MG atomic structures. Similar to recent ML investigations on glassy systems~\cite{2020NatPh..16..448B, reiser2022graph}, we phrase the EB prediction problem as a graph ML problem and solve it using Graph Neural Networks (GNNs). Under this formalization, atoms become nodes in a graph, and edges are constructed between nearby nodes to represent the atomic structure. Atom types are used as node features. Displacement vectors constructed from 3D node coordinates are used as edge features. Then EB prediction becomes a node regression task on graphs. 

We simulate MG systems and employ ARTn to calculate some EBs as training labels. Given the challenge of collecting data, a more data-efficient model with a stronger inductive bias is desired. In particular, the EB prediction problem exhibits E(3)-invariance, i.e., invariance to graph structure transformations including translations, rotations, reflections, and their combinations. We aim for a GNN that can handle such invariance, but general message-passing-based GNNs like GCN~\cite{kipf2017semisupervised} cannot. Some specially designed models are E(3)-invariant~\cite{schutt2017schnet, lu2019molecular, gasteiger2020directional, tholke2022torchmd, liao2022equiformer, batatia2022mace, batzner20223}, but, to the best of our knowledge, none of the existing methods can achieve \textit{invariance}, \textit{expressiveness}, and \textit{scalability} at the same time as we show in Table~\ref{tab:model_comparison}.

% For example, models like SchNet~\cite{schutt2017schnet} and MGCN~\cite{lu2019molecular} can handle invariance by doing a ``scalarization'' of features, e.g., computing the edge distance to turn 3D coordinate features into 1D. The scalarization can achieve invariance but also lose expressiveness for encoding critical higher-order information, so these models usually perform sub-optimally. On the other hand, attempts to improve expressiveness have been proposed by incorporating higher-order information like angular features in message passing~\cite{gasteiger2020directional}, but this approach cannot scale up, as the number of angles needs to be considered grows quadratically with the number of edges. More sophisticated equivariant models, like Torch-MD Net~\cite{tholke2022torchmd}, Equiformer~\cite{liao2022equiformer}, MACE~\cite{batatia2022mace}, and Nequip~\cite{batzner20223}, can achieve equivariance by equivariant feature extraction. Specifically, they only perform equivariant calculations to get node representations, which make the model equivariant and can be invariant when adapted to predict a scalar value. However, these models also do not show strong scalability as they rely on complex equivariant feature extraction. 

% They were mostly proposed for problems of small graphs and can struggle to scale to graphs with thousands of nodes in our case. Finally, 

\begin{figure}[t]
\begin{center}
\includegraphics[clip, width=\columnwidth]{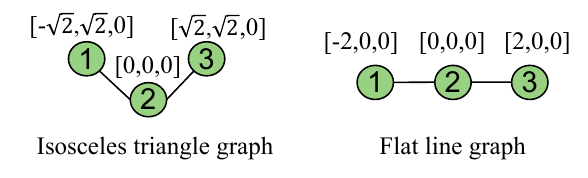}
\end{center}
\caption{Example graphs demonstrating model expressiveness. SchNet cannot distinguish the embeddings of node 1 in these two graphs but \model can.}
% \vskip -0.2in
\label{fig:expressive_example}
\end{figure}

To achieve an invariant model that is both expressive and scalable, we propose a simple but effective \textit{Symmetrized GNN (\model)}, which is E(3)-invariant in expectation. \model achieves E(3) invariance by introducing a symmetrization module to aggregate embeddings produced under different orthogonal transformations of the graph structure. 
It is expressive as there is no higher-order information loss caused by ``feature scalarization'' as in models like SchNet~\cite{schutt2017schnet}. For example, for the two graphs in Figure~\ref{fig:expressive_example}, SchNet will pass the same message to the central node of an isosceles triangle as if the middle node of a flat line, but SymGNN will easily distinguish between these two configurations (details in Appendix~\ref{app:expressiveness}). Also, SymGNN does not involve complex equivariant calculation, so it is much more scalable than methods relying on equivariant feature extraction~\cite{liao2022equiformer, batatia2022mace, batzner20223}. In our experiments, we demonstrate that when applied to MG graphs, \model outperforms a variety of widely used GNNs.
% including GCN, EGNN, SchNet, MGCN, and FAENet. 

% We compare relevant methods in Table~\ref{tab:model_comparison} and show that SymGNN is the only method that achieves invariance, expressiveness, and scalability. 

% SymGNN achieves invariant edge embeddings in expectation, and thus can capture invariance much better than general message-passing GNNs. In addition, as there is no severe information loss as in reducing the inputs by scalarization or projecting into PCA frames, SymGNN has stronger expressiveness than scalarization-based methods. 

% We also design ablation studies to show that the symmetrization module of \model is the critical design for improving the performance. 

\def\rot{\rotatebox{20}}
\begin{table}[t]
\centering
\small % Smaller font size to fit the table into a single column layout
% \resizebox{\columnwidth}{!}{%
\begin{tabular}{l|ccc}
\hline
Methods & \rot{Invariance} & \rot{Expressiveness} & \rot{Scalability}   \\
\midrule
GCN & \xmark & \cmark & \cmark    \\
EGNN & ? & \cmark & \cmark   \\
SchNet & \cmark & \xmark & \cmark   \\
MGCN & \cmark & \xmark & \cmark \\
FAENet & \cmark & \xmark & \cmark \\
DimeNet & \cmark & \cmark & \xmark   \\
Torch-MD Net & \cmark & \cmark & \xmark \\
Equiformer & \cmark & \cmark & \xmark    \\
\textbf{SymGNN} (ours) & E & \cmark & \cmark    \\
\bottomrule
\end{tabular}
% }
\caption{Comparison of different methods. \cmark \space means the model performs well evaluated by the corresponding category, \xmark \space means not, and ? means ``unclear'', or ``possible after non-trivial
extensions''. The E indicates that the model satisfies the property in expectation.}
\label{tab:model_comparison}
\end{table}

% Recent work, ML on material, no graph ~\cite{cubuk2017structure}.

% However, some properties, like plasticity and glass transition temperature, can be challenging for ML models to predict directly with limited data. In contrast,

Moreover, to better understand the EB prediction and benefit MG research, we also generate explanations along with the model prediction. Our proposed explanation method extends GNNExplainer~\cite{ying2019gnnexplainer} to the node regression task to generate edge-based structure explanations. It helps us to identify and visualize the importance of each edge in predicting an EB. We also show that the generated explanations match the medium-range order (MRO) hypothesis of MGs and possess unique topological properties that correlate with the optimal-volume cycles in a persistent diagram~\cite{obayashi2018volume}. Our findings provide further insights into the understanding of EBs, and our explanations can potentially benefit new scientific discoveries. We summarize our contributions as the following:

\begin{enumerate}
    \item We formulate a material science research problem of predicting MG EBs as an ML problem of node regression on graphs.
    \item We collect MG data for ML research, with precisely simulated EBs using ARTn.
    \item We propose a simple but effective \model model that exhibits E(3)-invariance in expectation and predicts MG EBs accurately and fast.
    \item We generate explanations for EB predictions that match the MRO hypothesis, express unique topological properties, and provide insights for scientific discoveries.
\end{enumerate}

% What is the problem you are trying to solve?
% Why is this problem important?
% Why is this problem hard?
% What are the current solutions? Their limitations?
% How can you improve upon the current solutions? (Theory)
% Did your method work? (Experiments)

% \section{Preliminaries} \label{sec:preliminary}
% \input{sections/preliminary}

\section{Related work}\label{sec:related}
\subsection{ML in Material Science}

The application of ML in materials science has seen significant advancements recently, with various models to tackle different aspects of material science problems. Among these, GNNs have emerged as a powerful tool for representing and analyzing materials at the atomic level, owing to their ability to capture the complex relationships and interactions between atoms in a material. For example, estimating the propensity of individual atoms \cite{2020NatPh..16..448B}, potential energy exhibited by a system of atoms \cite{schutt2017schnet}. In these settings, inductive bias of equivariance and invariance often plays a key role in the generalizability of network. For example, in our problem the EB only depends on local molecule configuration and thus are invariant on translation, rotation, and reflection of graphs. To incorporate this physical inductive bias, various invariant and equivariant GNNs have been proposed. Invariant GNNs often restrict graph features to be rotationally invariant, such as edge distances and angles, or reducing the inputs by projecting it onto PCA frames~\cite{schutt2017schnet, gasteiger2020directional, gasteiger2020fast, duval2023faenet}, whereas equivariant networks are proposed to leverage tensorial transformation that can extract equivariant node features \cite{schutt2021equivariant, liao2022equiformer, batatia2022mace, batzner20223, tholke2022torchmd}.

% The SchNet idea has also been further developed. For example, FAENet~\cite{duval2023faenet} reduces the feature with projection onto PCA frames instead of scalarization. This method exhibits richer information than SchNet but is still restrictive. 

% One notable architecture in this domain is the FAENet proposed in~\cite{duval2023faenet}. FAENet leverages the frame averaging technique~\cite{puny2021frame} and designs PCA-based frames that enhances the equivariance properties of message-passing GNNs, making them more suitable for materials modeling where rotational and translational invariance are crucial. Another significant contribution is Torch-MD Net [9], which presents an equivariant transformer (ET) architecture for developing molecular potentials. ET relies on a learned featurization of atomic types and coordinates and leverages the power of transformers to capture interactions in molecular systems, offering improved accuracy in predicting molecular dynamics.

Several evaluation benchmarks for equivariant ML models on molecular dynamics~\cite{bihani2024egraffbench} and solid-state materials systems~\cite{choudhary2020joint,lee2023matsciml} have been proposed. EGRaffBench~\cite{bihani2024egraffbench} provides insights into the performance of various equivariant models in predicting forces in molecular systems, highlighting their potential in simulating atomistic interactions. JARVIS~\cite{choudhary2020joint} and MatSciML~\cite{lee2023matsciml} benchmark ML models for solid-state materials systems and demonstrate the potential of ML models, including various GNNs, in predicting properties and behaviors of solid-state materials, thereby aiding in materials design and discovery. Furthermore, a recent overview~\cite{duval2023hitchhiker} of geometric GNNs for 3D atomic systems offers valuable insights into the development and application of GNNs in materials science, emphasizing the importance of geometric considerations in modeling atomic systems.

% \subsection{Data Augmentation}
% Data augmentation is an alternative way of incorporating desired data symmetries and has been found effective in computer vision. In the graph domain, this area remains underexplored for equivariant/invariant models applied to materials science. To our knowledge, molecular graphs have been augmented with rotations and reflections~\cite{hu2021forcenet}, but the accuracy vs. scalability gains are not significant enough to constitute a true Pareto optimal improvement as pointed out by~\cite{wang2020predicting}. One reason is the suboptimal utilization of geometric information, and the large space of all the augmentations over E(3).

\subsection{MGs and EBs}
% MG Property -> EB in MG -> ML in EB
Understanding the relationship between the atomic structure and physical properties of MGs is one of the greatest challenges for both material science and condensed matter physics~\cite{falk2011deformation,sun2015fracture,nicolas2018deformation}. However, the structure-property relationship of MGs is often challenging to characterize directly due to the complexity of the physical properties~\cite{cubuk2017structure, 2020NatPh..16..448B}. EBs describe the local roughness of the energy landscape by comparing the average energy difference around an atom's local neighbors. They are influential in MG dynamics and their physical properties~\cite{berthier2011theoretical, kirchner2022beyond}, for example, the degree of ductility during fracture~\cite{tang2021energy}. Therefore, EBs can act as an important intermediary step when predicting the physical properties with the atomic structures as inputs~\cite{debenedetti2001supercooled, yu2012correlation, wang2020predicting, tang2021energy}. ML methods have been applied to investigate the relationship between the atomic structures and physical properties in MG~\cite{2020NatPh..16..448B}. For EBs in particular, \cite{wang2020predicting} explored using XGBoost to
classify nodes with the highest 5 percent activation energy. Our work furthers the investigation of \cite{wang2020predicting} by leveraging the natural graph structure using GNNs to perform a regression for EBs and generating insightful explanations.

\section{Problem Setup and Preliminaries}\label{sec:prelim}
\subsection{EB Prediction with GNNs} \label{subsec:problem_formulation}

The problem of predicting EBs of MGs can be formalized as a node regression problem on graphs. Under this formulation, atoms become nodes in a graph, and edges are constructed between nearby nodes. The MG data thus becomes a graph with $n$ nodes and $m$ edges. We represent the graph structure with $G$, which indicates all the edges and is normally represented in the form of an adjacency matrix. The node features are the atom types, which we represent with $\mZ = \{\vz^1, \vz^2, \dots, \vz^n\}$. The edge features are the displacement vectors constructed from 3D node coordinates, which we represent with $\mX = \{\vx^1, \vx^2, \dots, \vx^m\ | \vx^i \in \R^3 \}$. The regression task is to predict the EB label $y \in \R$ of each node with the graph structure and features as inputs, i.e., a model that maximizes $P(y |\, G, \mZ, \mX)$. We further break down the prediction process into two steps. The first step encodes node and edge features to embeddings $\mH$. The second step predicts $y$ with $G$ and $\mH$ as inputs. The objective to maximize becomes the following, 
\begin{equation} \label{eq:problem}
% \resizebox{\columnwidth}{!}{$P(y |\, G, \mZ, \mX) = \int_{\mH}P(y |\, G, \mH) P(\mH |\, G, \mZ, \mX) d\mH$} 
% P(y |\, G, \mZ, \mX) = \int_{\mH}P(f(G, \mH)) P(Enc(G, \mZ, \mX)) d\mH
P(y |\, G, \mZ, \mX) = \int_{\mH}P(y |\, G, \mH) P(\mH |\, G, \mZ, \mX) d\mH
\end{equation}
We solve this problem with the state-of-the-art graph ML models - GNNs.

\subsection{Orthogonality and Invariance} \label{subsec:invariance}
EB is invariant to Euclidean transformations of the atomic graph structure, for example, rotations, reflections, and translations, because it is the average energy needed for a node to hop between its current and nearby energy subbasins. Given that the graph is described with displacement vectors of relative positions, translations will be canceled, and the invariance to Euclidean transformations can be reduced to the invariance to orthogonal transformations~\cite{hall2013lie}, which is defined as the following,

\begin{definition}[Orthogonal Transformation]\label{def:ortho_trans}
A linear transformation $T: \mathbb{R}^d \to \mathbb{R}^d$ is called an \textit{orthogonal transformation} if it preserves the inner produce $\langle \cdot, \cdot \rangle$ on $\R^d$, i.e., $\forall \vx^1, \vx^2 \in \mathbb{R}^d$, $\langle T(\vx^1), T(\vx^2) \rangle = \langle \vx^1, \vx^2 \rangle$. Then, the matrix form of $T$ has $|\det(T)| = 1$. The \textit{orthogonal group} in dimension $d$ is the group of all such orthogonal transformations on $\mathbb{R}^d$ and is denoted as $\OO(d)$.
\end{definition} 

We also state a well-known lemma in group theory~\cite{hall2013lie} and a theorem by Euler~\cite{slabaugh1999computing} for decomposing the orthogonal group and rotations respectively. They will be useful for modeling invariance.

\begin{lemma}[$\OO(3)$ Decomposition] \label{lemma:decompose}
The orthogonal group $\OO(3)$ can be decomposed into rotations and non-rotations. The rotations also form a group denoted as $\SO(3)$, and it contains all transformations $R$ whose matrix forms have $\det(R) = 1$. The non-rotations contain all the reflections and roto-reflections (also called improper rotation) $\tilde R$, whose matrix form have $\det(\tilde R) = -1$. Non-rotations can be denoted as $P \cdot \SO(3)$, with $P$ being any reflection transformation through the origin. 
% $P \cdot \SO(3)$, with $\SO(3)$ also being a group and $P$ being any reflection transformation through the origin. Geometrically, $\SO(3)$ 
% , i.e., $\OO(3) = \SO(3) \cup (\P \cdot \SO(3))$, 
% Furthermore, $T \in \P \cdot \SO(3)$ can be two types of transformations geometrically, e.g., reflections (when $T \in \SO(3)$ equals the identity $I$) and roto-reflections (also called improper rotation, when $T \in \SO(3)$ does not equal to the identity $I$).
\end{lemma}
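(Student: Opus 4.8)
The plan is to exhibit the determinant as a group homomorphism $\det : \OO(3) \to \{+1,-1\}$ and to read off the decomposition from its kernel and its unique nontrivial coset. First I would recall that every $Q \in \OO(3)$ satisfies $Q^\top Q = I$ (the matrix form of inner-product preservation in Definition~\ref{def:ortho_trans}), so $\det(Q)^2 = \det(Q^\top)\det(Q) = \det(Q^\top Q) = 1$, giving $\det(Q) \in \{+1,-1\}$; this justifies the dichotomy in the statement. Since $\det$ is multiplicative it is a homomorphism onto the two-element group, and $\SO(3) := \{R \in \OO(3) : \det(R) = 1\}$ is exactly its kernel, hence a (normal) subgroup — closure, identity, and inverses being immediate from multiplicativity of $\det$. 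The non-rotations form the preimage $\det^{-1}(-1)$, which is nonempty (for instance the plane reflection $\mathrm{diag}(-1,1,1)$ lies in it), so $\OO(3)$ is partitioned into the two cosets $\SO(3)$ and $\det^{-1}(-1)$.

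Next I would identify the non-rotation coset with $P \cdot \SO(3)$ for an arbitrary reflection $P$ through the origin. Fix any such $P$; it has $\det(P) = -1$. For $P \cdot \SO(3) \subseteq \det^{-1}(-1)$, note $\det(PR) = \det(P)\det(R) = -1$ for every $R \in \SO(3)$. Conversely, given any $\tilde R$ with $\det(\tilde R) = -1$, the product $P^{-1}\tilde R$ is orthogonal with determinant $(-1)(-1) = 1$, so $P^{-1}\tilde R \in \SO(3)$ and $\tilde R = P(P^{-1}\tilde R) \in P \cdot \SO(3)$. Hence $\det^{-1}(-1) = P \cdot \SO(3)$, and in particular the coset does not depend on which reflection $P$ is chosen. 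Combined with the previous paragraph this yields $\OO(3) = \SO(3) \ \sqcup\ P \cdot \SO(3)$.

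Finally I would verify the descriptive claim that the $\det = -1$ elements are precisely the reflections and roto-reflections (improper rotations). This is the only step needing structural input beyond determinant bookkeeping: using the real normal form of an orthogonal $3\times 3$ matrix — a block-diagonal form with $2\times 2$ rotation blocks and $\pm 1$ diagonal entries, obtained from the fact that its complex eigenvalues lie on the unit circle and occur in conjugate pairs — an element with $\det = -1$ must carry an odd number of $-1$ eigenvalues, which in a suitable orthonormal basis displays it as either a reflection through a plane or such a reflection composed with a rotation about the normal axis. I expect this normal-form step to be the only nonroutine part; everything else is the standard index-two subgroup argument, and no connectivity property of $\SO(3)$ is required for the statement as phrased, so I would omit it.
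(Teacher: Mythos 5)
Your argument is correct, and it is the standard one: the paper does not actually prove this lemma at all --- it states it as a well-known group-theory fact with a citation --- so your determinant-homomorphism/index-two-coset argument (plus the orthogonal normal form to identify the $\det=-1$ elements as reflections and roto-reflections) is exactly the textbook proof the citation points to. One small point of contact with the paper: in its use of the lemma (the proof of the non-rotation decomposition in the appendix) the reflection through the origin is taken to be $P=-I$, which has $\det(-I)=(-1)^3=-1$ in three dimensions; your observation that the coset $P\cdot\SO(3)$ is independent of which $\det=-1$ element $P$ is chosen covers this choice, so nothing is missing.
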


% \sz{where P is a group, whether SO(3) and PSO(3) are cosets or subgroup, where PSO(3) can be explained without mentioning elements in them. Check the equivalence of O(3) - SO(3) vs. P SO(3)}

% \begin{proof}
%     The lemma is a well-known result in group theory and Lie groups. Please refer to~\cite{hall2013lie} for a formal proof.
% \end{proof}

\begin{theorem} (Euler)\label{thm:euler}
Define the rotations around the three coordinate axes $x_1, x_2$, and $x_3$ in $\mathbb{R}^3$ by
$$O_{x_1}(\alpha) = 
\begin{bmatrix}
    1 & 0 & 0 \\
    0 & \cos(\alpha) & -\sin(\alpha) \\
    0 & \sin(\alpha) & \cos(\alpha)
\end{bmatrix}$$
$$O_{x_2}(\beta) = 
\begin{bmatrix}
    \cos(\beta) & 0 & -\sin(\beta) \\
    0 & 1 & 0 \\
    \sin(\beta) & 0 & \cos(\beta)
\end{bmatrix}$$
$$O_{x_3}(\gamma) = 
\begin{bmatrix}
    \cos(\gamma) & -\sin(\gamma) & 0 \\
    \sin(\gamma) & \cos(\gamma) & 0 \\
    0 & 0 & 1
\end{bmatrix}$$
Then any rotation $R \in \SO(3)$ can be written as $R_{\alpha, \beta, \gamma} = O_{x_1}(\alpha)O_{x_2}(\beta)O_{x_3}(\gamma)$ for some angles $[\alpha, \beta, \gamma] \in [-\pi, \pi]^3$. 
These angles are called the Euler angles. 
\end{theorem}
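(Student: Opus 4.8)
The plan is to prove the statement constructively, by ``peeling off'' rotations from the left. Given $R \in \SO(3)$, I would first choose $\alpha$ and $\beta$ so that $O_{x_2}(-\beta)O_{x_1}(-\alpha)R$ fixes the third coordinate axis, and then observe that any rotation fixing that axis must equal $O_{x_3}(\gamma)$ for a suitable $\gamma$.

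First I would compute the last column of the product $O_{x_1}(\alpha)O_{x_2}(\beta)O_{x_3}(\gamma)$. Since $O_{x_3}(\gamma)$ fixes $e_3 = (0,0,1)^\top$, this column equals $O_{x_1}(\alpha)O_{x_2}(\beta)e_3 = (-\sin\beta,\ -\sin\alpha\cos\beta,\ \cos\alpha\cos\beta)^\top$. So, writing the (unit) third column of $R$ as $(a,b,c)^\top$, the goal is to solve $-\sin\beta = a$, $-\sin\alpha\cos\beta = b$, $\cos\alpha\cos\beta = c$. Taking $\beta = -\arcsin a \in [-\tfrac{\pi}{2},\tfrac{\pi}{2}]$ gives $\cos\beta = \sqrt{1-a^2}\ge 0$. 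If $\cos\beta \ne 0$, the identity $b^2+c^2 = 1-a^2 = \cos^2\beta$ (which uses $a^2+b^2+c^2=1$) shows that $\sin\alpha = -b/\cos\beta$ and $\cos\alpha = c/\cos\beta$ determine a unique $\alpha \in (-\pi,\pi]$; in the degenerate ``gimbal-lock'' case $a = \pm 1$ we have $b = c = 0$ and may simply put $\alpha = 0$. By construction $R e_3 = O_{x_1}(\alpha)O_{x_2}(\beta)e_3$.

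Next, set $Q := O_{x_2}(-\beta)O_{x_1}(-\alpha)R$. Using $O_{x_i}(-\theta) = O_{x_i}(\theta)^{-1}$, the previous step gives $Q e_3 = e_3$, and $Q \in \SO(3)$ since each $O_{x_i}(\cdot)$ has determinant $1$ and $R$ does. It then remains to show that any $Q \in \SO(3)$ with $Q e_3 = e_3$ is of the form $O_{x_3}(\gamma)$: being orthogonal and fixing $e_3$, $Q$ maps the plane $e_3^\perp = \{x_3 = 0\}$ to itself, and its restriction there is an orthogonal map of determinant $\det(Q) = 1$, hence a planar rotation by some angle $\gamma \in [-\pi,\pi]$; in the standard basis this is exactly $O_{x_3}(\gamma)$. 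Combining, $R = O_{x_1}(\alpha)O_{x_2}(\beta)Q = O_{x_1}(\alpha)O_{x_2}(\beta)O_{x_3}(\gamma)$ with all three angles in $[-\pi,\pi]$.

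I expect the main obstacle to be bookkeeping rather than anything conceptual: cleanly handling the gimbal-lock case, verifying that the angles selected via $\arcsin$ and via the $(\sin,\cos)$ pair genuinely lie in $[-\pi,\pi]$, and spelling out the elementary linear-algebra fact that a rotation fixing an axis restricts to a rotation of the orthogonal plane (which itself uses that $\det$ is multiplicative and that a determinant-one orthogonal map of $\R^2$ is a rotation). An alternative would be to invoke connectedness of $\SO(3)$ together with a Lie-algebra generation argument, but the constructive route above is more elementary and has the advantage of exhibiting the Euler angles explicitly.
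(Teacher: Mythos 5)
Your construction is correct. The paper itself gives no proof of this theorem --- it is stated as a classical result with a citation (Euler; Slabaugh's note on computing Euler angles) --- so there is no internal argument to compare against; your write-up is essentially the standard constructive derivation that the cited reference carries out: read the Euler angles off the third column of $R$, then reduce to a planar rotation about the fixed axis. The details check out with the paper's specific (slightly non-standard) sign convention, since $O_{x_1}(\alpha)O_{x_2}(\beta)e_3 = (-\sin\beta,\,-\sin\alpha\cos\beta,\,\cos\alpha\cos\beta)^\top$ exactly as you computed, your choice $\beta=-\arcsin a$ keeps $\cos\beta\ge 0$ so the identity $b^2+c^2=\cos^2\beta$ pins down $\alpha$, the gimbal-lock case $a=\pm1$ is handled, and the final step (an element of $\SO(3)$ fixing $e_3$ preserves $e_3^\perp$ and restricts there to a determinant-one orthogonal map, i.e.\ a planar rotation, hence equals $O_{x_3}(\gamma)$) is sound because the determinant of $Q$ factors as the eigenvalue $1$ at $e_3$ times the determinant of the restriction. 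All three angles land in $[-\pi,\pi]$ as required, so $R=O_{x_1}(\alpha)O_{x_2}(\beta)O_{x_3}(\gamma)$. The only cosmetic remark is that for the paper's purposes (Lemma 4.1 and the symmetrization module) existence of \emph{some} parameterization is all that is used, so your explicit formulas are more than sufficient; an abstract argument via connectedness of $\SO(3)$ would prove less (no explicit angles) for more machinery.
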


Then we formally introduce the invariant/equivariant transformation.

\begin{definition}[Invariant/Equivariant Transformation]
    Given a group $K$ acts on $\R^d$. A transformation $T: \R^d \to \R^d$ is invariant to $K$ if $T(\vx) = T(k\cdot \vx)$ 
    and equivariant to $K$ if $k\cdot T(\vx) = T(k \cdot \vx)$ for all $k \in K$ and for all $\vx \in \R^d$.
\end{definition}

\subsection{GNNExplainer}
As a representative GNN explanation method, GNNExplainer seeks to explain GNN classifications by selecting an important edge-induced subgraph $G_S$ that minimizes the entropy $H(\cdot)$ of the label $Y$. Since $G_S$ is discrete, GNNExplainer learns a continuous distribution $\gG$ over $G_S$ that gives the minimal expected entropy, where $\gG$ can be implemented as a learnable edge mask $\mM \in \R^{|G|}$ applied on edges of $G$ after a sigmoid function $\sigma$. Mathematically, the optimization objective is 
\begin{equation}\label{eq:gnnexplainer}
    \min_{\gG} \mathbb{E}_{G_S \sim \gG} H(Y|G=G_S) =\min_{\mM } H(Y|G=\sigma(\mM) \odot G)
\end{equation}

\section{Method} \label{sec:methodology}
In this section, we present \model for solving the EB prediction problem we formalized in Section \ref{subsec:problem_formulation}. We first introduce the theory behind the core symmetrization module for capturing $\OO(3)$ invariance in Section \ref{subsec:symmetrization}, then the full \model model in Section \ref{subsec:symgnn}, and finally how we apply explanation algorithms to \model to reveal the connection between the atomic structures and EBs in Section \ref{subsec:explain}.

\subsection{Theory of Symmetrization Over $\OO(3)$} \label{subsec:symmetrization}
Although EB is invariant to Euclidean transformations of the atomic graph structure, most GNNs are not designed to automatically capture such invariance. There are existing GNNs specialized for molecular graphs that can handle such invariance, but they either utilize scalarization that cannot handle higher-order information, or cannot scale up to graphs with thousands of nodes like MGs. We thus propose a symmetrization module that can better capture invariance and efficiently scale up. This section presents the theory behind the symmetrization.

For the node regression problem formalized in Section \ref{subsec:problem_formulation}, $\mX$ only represents one set of displacement vectors under one particular coordinate system. To achieve $\OO(3)$-invariant (and thus $\E(3)$-invariant as explained in Section~\ref{subsec:invariance}) predictions, we propose a symmetrization over all orthogonal transformations of $\mX$, denoted as $\X = \{T(\mX) \, | \, \forall T \in \OO(3) \}$. Under symmetrization, we reformulate the feature encoding step in Equation~\ref{eq:problem}, i.e.,$P(\mH \, | \, G, \mZ, \X)$, as a probability integrated over $\X$, i.e., 
\begin{equation} \label{eq:integrate}
    P(\mH \, | \, G, \mZ, \X) 
    = \int_{T \in \OO(3)} P(\mH \, | \, G, \mZ, T(\mX))P(T) \, dT 
\end{equation}

Notice that a truly $\OO(3)$-invariant model will give the same result for $P(\mH \, | \, G, \mZ, \mX)$ and $P(\mH \, | \, G, \mZ, \X)$. In the new formulation, when maximizing $P(\mH \, | \, G, \mZ, \X)$, the model will learn the desired invariance by foreseeing and aggregating different transformed graphs. To model such an integral, we first define the distribution of $T$ on $\OO(3)$ through the following two lemmas.

\begin{lemma}\label{lemma:non-rotation} 
Any non-rotation $\tilde R \in P \cdot \SO(3)$ can be written as $\tilde R_{\alpha, \beta, \gamma} = - O_{x_1}(\alpha)O_{x_2}(\beta)O_{x_3}(\gamma)$ for some parameters $[\alpha, \beta, \gamma] \in [-\pi, \pi]^3$. 
\end{lemma}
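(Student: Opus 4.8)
The plan is to reduce the claim to Euler's theorem (Theorem~\ref{thm:euler}) by exploiting the fact that negating a determinant-$(-1)$ orthogonal matrix in odd dimension yields a rotation. Concretely, I would fix an arbitrary $\tilde R \in P \cdot \SO(3)$. By Lemma~\ref{lemma:decompose} this means precisely that $\tilde R$ is orthogonal with $\det(\tilde R) = -1$. (It is worth one sentence noting that the coset $P \cdot \SO(3)$ does not depend on the choice of reflection $P$: any $P$ with $\det P = -1$ yields the same coset, which is what licenses us to work with the distinguished representative $-I$ and matches the phrasing ``$P$ being any reflection transformation through the origin'' in Lemma~\ref{lemma:decompose}.)

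The next step is to show $-\tilde R \in \SO(3)$, i.e.\ that $-\tilde R$ is a genuine rotation. It is orthogonal because $-\tilde R = (-I)\tilde R$ is a product of orthogonal matrices, $-I$ being orthogonal since $(-I)^\top(-I) = I$. It has determinant $1$ because, using multiplicativity of the determinant and that we are in dimension $3$, $\det(-\tilde R) = \det(-I)\det(\tilde R) = (-1)^3 \cdot (-1) = 1$. Hence $-\tilde R \in \SO(3)$.

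Finally, I would apply Theorem~\ref{thm:euler} to $-\tilde R$: there exist Euler angles $[\alpha,\beta,\gamma] \in [-\pi,\pi]^3$ with $-\tilde R = O_{x_1}(\alpha)O_{x_2}(\beta)O_{x_3}(\gamma)$. Multiplying both sides by $-1$ gives $\tilde R = -O_{x_1}(\alpha)O_{x_2}(\beta)O_{x_3}(\gamma) = \tilde R_{\alpha,\beta,\gamma}$, which is exactly the asserted form, completing the proof.

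I do not expect a real obstacle here; the proof is short. The one point to handle carefully in the write-up is the use of the dimension $d = 3$: the identity $\det(-I) = -1$ is special to odd dimensions, and the statement would fail verbatim in even dimensions, so the argument should make explicit that this parity is being invoked. Everything else (orthogonality of products, multiplicativity of $\det$) is routine.
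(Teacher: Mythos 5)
Your proposal is correct and follows essentially the same route as the paper's own proof: show $-\tilde R$ is orthogonal, compute $\det(-\tilde R) = (-1)^3\cdot(-1) = 1$ to conclude $-\tilde R \in \SO(3)$, apply Theorem~\ref{thm:euler}, and negate. The only cosmetic difference is that you justify orthogonality of $-\tilde R$ via the product $(-I)\tilde R$ of orthogonal matrices, while the paper verifies inner-product preservation directly using linearity; both are fine, and your explicit remark about the odd-dimension parity is a nice touch already implicit in the paper's $(-1)^3$ computation.
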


\begin{proof}
    Please refer to Appendix \ref{proof:non-rotation}.
\end{proof}

Intuitively, Theorem~\ref{thm:euler} says that any rotation in 3D can be decomposed into a combination of rotations that rotate only around the $x_1$-axis, $x_2$-axis, and $x_3$-axis and parameterized with the Euler angels. Similarly, Lemma~\ref{lemma:non-rotation} says a similar decomposition and parameterization can be achieved for non-rotations as well. Bring these two results together gives the following lemma for decomposing any orthogonal transformation $T \in \OO(3)$.

\begin{lemma}\label{lemma:ortho} 
Any orthogonal transformation $T \in \OO(3)$ can be written as $T_{\lambda, \alpha, \beta, \gamma} = (-1)^{\lambda} O_{x_1}(\alpha)O_{x_2}(\beta)O_{x_3}(\gamma)$ for some parameters $\lambda \in \{0, 1\}$ and $[\alpha, \beta, \gamma] \in [-\pi, \pi]^3$.
\end{lemma}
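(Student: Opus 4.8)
The plan is to prove the lemma by a simple case split on $\det(T)$, gluing together the three results already available: the $\OO(3)$ decomposition (Lemma~\ref{lemma:decompose}), Euler's parameterization of $\SO(3)$ (Theorem~\ref{thm:euler}), and the analogous parameterization of the non-rotations (Lemma~\ref{lemma:non-rotation}). The only conceptual ingredient is the observation that the scalar $(-1)^\lambda$ is exactly the right device to unify the ``$+$'' branch (rotations) and the ``$-$'' branch (reflections/roto-reflections) into a single formula.

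Concretely, I would first invoke Lemma~\ref{lemma:decompose} to note that any $T \in \OO(3)$ lies in exactly one of $\SO(3)$ (with $\det(T) = 1$) or $P\cdot\SO(3)$ (with $\det(T) = -1$). In the first case, Theorem~\ref{thm:euler} produces angles $[\alpha,\beta,\gamma] \in [-\pi,\pi]^3$ with $T = O_{x_1}(\alpha)O_{x_2}(\beta)O_{x_3}(\gamma)$; taking $\lambda = 0$ and using $(-1)^0 = 1$ gives $T = (-1)^\lambda O_{x_1}(\alpha)O_{x_2}(\beta)O_{x_3}(\gamma)$. In the second case, Lemma~\ref{lemma:non-rotation} produces angles $[\alpha,\beta,\gamma] \in [-\pi,\pi]^3$ with $T = -O_{x_1}(\alpha)O_{x_2}(\beta)O_{x_3}(\gamma)$; taking $\lambda = 1$ and using $(-1)^1 = -1$ gives the same formula. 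Since these two cases are exhaustive, every $T \in \OO(3)$ admits the representation $T_{\lambda,\alpha,\beta,\gamma} = (-1)^\lambda O_{x_1}(\alpha)O_{x_2}(\beta)O_{x_3}(\gamma)$ with $\lambda \in \{0,1\}$ and $[\alpha,\beta,\gamma]\in[-\pi,\pi]^3$, which is the claim.

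I do not expect any real obstacle: the statement is an immediate corollary of the cited results, and the proof is a two-line case analysis. The only points worth a clause of justification are (i) that $\SO(3)$ and $P\cdot\SO(3)$ genuinely exhaust $\OO(3)$ — which is precisely what Lemma~\ref{lemma:decompose} asserts, so we are entitled to assume it — and (ii) that the combined parameterization covers the whole group and not merely a dense subset, which holds because both Theorem~\ref{thm:euler} and Lemma~\ref{lemma:non-rotation} are stated as surjective parameterizations over the closed cube $[-\pi,\pi]^3$. No uniqueness of $(\lambda,\alpha,\beta,\gamma)$ is claimed or needed.
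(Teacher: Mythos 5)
Your proof is correct and follows exactly the route the paper intends: the paper's own proof of Lemma~\ref{lemma:ortho} is just the one-line remark that it follows from Lemma~\ref{lemma:decompose}, Theorem~\ref{thm:euler}, and Lemma~\ref{lemma:non-rotation}, and your case split on $\det(T)$ with $\lambda=0$ for rotations and $\lambda=1$ for non-rotations is precisely that argument spelled out.
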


\begin{proof}
    Follow from Lemma~\ref{lemma:decompose}, Theorem~\ref{thm:euler}, and Lemma~\ref{lemma:non-rotation}.
\end{proof}

Lemma~\ref{lemma:ortho} allows the integral in Equation~\ref{eq:integrate} to be reduced into an integral over $\lambda$ and $[\alpha, \beta, \gamma]$ in Equation~\ref{eq:decompose_integral}, which is the objective our GNN will model.
\begin{small}
\begin{align}~\label{eq:decompose_integral}
    &P(\mH \, | \, G, \mZ, \X) =  \\
    &\int_{\lambda, \alpha, \beta, \gamma} P(\mH \, | \, G, \mZ, T_{\lambda, \alpha, \beta, \gamma}(\mX))P(T_{\lambda, \alpha, \beta, \gamma}) \, d\lambda d\alpha d\beta d\gamma \nonumber 
\end{align}
\end{small}

\subsection{Symmetrized GNN} \label{subsec:symgnn}
We now present the full SymGNN model with an illustration shown in Figure~\ref{fig:symgnn}. SymGNN consists of two sub-modules. The first is the symmetrization module mentioned above for producing $\OO(3)$-invariant embeddings $\mH$, which we indicate with $\mH = Sym(G, \mZ, \mX)$. The second is a prediction module that takes the symmetrized $\mH$ and $G$ to perform message passing with attention and then node regression.
% , which we indicate with $y = f(G, \mH)$. 

\begin{figure*}[t]
\begin{center}
\includegraphics[width=\textwidth]{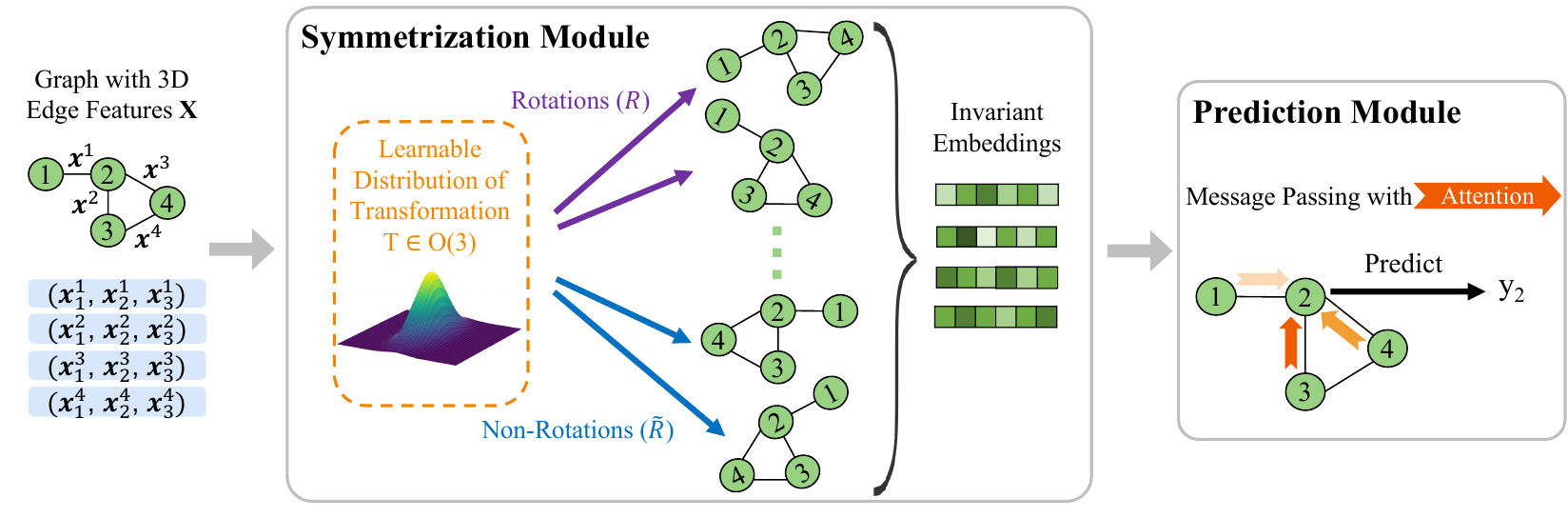}
\end{center}
\caption{Illustration of the \model framework. Given an input graph with node features being atom types and edge feature the relative distance, the symmetrization module of SymGNN aggregates encoding results on various orthogonally transformed graphs sampled from a learnable distribution to achieve $\OO(3)$-invariant in expectation. The invariant embeddings are then passed to message-passing layers with attention to aggregate information and predict label.}
% \vskip -0.2in
\label{fig:symgnn}
\end{figure*}

The $Sym$ module produces embeddings following the objective in Equation~\ref{eq:decompose_integral} with a learnable encoder $Enc$, i.e., 
\begin{small}
\begin{align} \label{eq:symgnn}
    \mH &= Sym(G, \mZ, \mX) \\
    =& \int_{\lambda, \alpha, \beta, \gamma} Enc(G, \mZ, T_{\lambda, \alpha, \beta, \gamma}(\mX)) P(T_{\lambda, \alpha, \beta, \gamma}) d\lambda d\alpha d\beta d\gamma \nonumber 
\end{align}
\end{small}

However, one challenge is that there are infinitely many $T \in \OO(3)$, which makes the integral intractable. To model such an integral, we generate transformations $T_1, \dots, T_k$ from $\OO(3)$ by sampling $\lambda$ and $[\alpha, \beta, \gamma]$ to approximate the $Sym$ in Equation~\ref{eq:symgnn}, which gives the $Sym_{T_1, \dots, T_k}$ we use in practice.
\begin{equation} \label{eq:symgnn_sample}
    Sym_{T_1, \dots, T_k}(G, \mZ, \mX) 
    = \frac{1}{k} \sum_{i=1}^{k} Enc(G, \mZ, T_{i}(\mX))
\end{equation}

We show that $Sym_{T_1, \dots, T_k}$ is $\OO(3)$-invariant in expectation under assumptions of uniform distributions.

\begin{theorem} \label{thm:invariance}
    Assume $T_1, \dots, T_k$ are random transformations that follow a uniform distribution over all $T \in \OO(3)$. Then, $Sym$ is $\OO(3)$-invariant in expectation in the sense that $\mathbb{E}_{T_1, \dots, T_k} [Sym_{T_1, \dots, T_k}(G, \mZ, \mX)] = \mathbb{E}_{T_1, \dots, T_k}[Sym_{T_1, \dots, T_k}(G, \mZ, T_0\mX)]$ for any $T_0 \in \OO(3)$.
\end{theorem}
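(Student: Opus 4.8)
The plan is to exploit the fact that the claimed property is an equality of expectations over a random collection of transformations, so by linearity it suffices to prove it for a single transformation $T$ drawn uniformly from $\OO(3)$; averaging over $i=1,\dots,k$ then gives the full claim for $Sym_{T_1,\dots,T_k}$ with the $1/k$ normalization. Concretely, I would write
\begin{align*}
\mathbb{E}_{T_1,\dots,T_k}[Sym_{T_1,\dots,T_k}(G,\mZ,\mX)]
&= \frac{1}{k}\sum_{i=1}^k \mathbb{E}_{T_i}\big[Enc(G,\mZ,T_i(\mX))\big]
= \mathbb{E}_{T}\big[Enc(G,\mZ,T(\mX))\big],
\end{align*}
using that the $T_i$ are identically distributed. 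So the theorem reduces to showing $\mathbb{E}_T[Enc(G,\mZ,T(\mX))] = \mathbb{E}_T[Enc(G,\mZ,T(T_0\mX))]$ for every fixed $T_0\in\OO(3)$.

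The key step is a change-of-variables / invariance-of-Haar-measure argument. Since $\OO(3)$ is a compact group and the uniform distribution on it is its (normalized) Haar measure, the pushforward of this measure under right multiplication by any fixed $T_0\in\OO(3)$ is again the same measure; equivalently, if $T$ is uniform on $\OO(3)$ then so is $T':=T T_0$. Substituting $T' = T T_0$ in the right-hand side integral, $Enc(G,\mZ,T(T_0\mX)) = Enc(G,\mZ,(TT_0)(\mX)) = Enc(G,\mZ,T'(\mX))$, and the integral over $T'$ against the (unchanged) uniform measure equals $\mathbb{E}_{T'}[Enc(G,\mZ,T'(\mX))]$, which is exactly the left-hand side. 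Here I am also using that the action is linear, so the composite map $\vx\mapsto T(T_0\vx)$ coincides with the action of the group element $TT_0$, i.e.\ the action is associative — this is what lets the two "transformations" collapse into one.

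In terms of the paper's own machinery, one can phrase the same argument via the parameterization in Lemma~\ref{lemma:ortho}: a uniform $T\in\OO(3)$ corresponds to sampling $\lambda\in\{0,1\}$ uniformly and $[\alpha,\beta,\gamma]$ from the appropriate distribution, and the integral in Equation~\ref{eq:decompose_integral} is over exactly these parameters; right-translation by $T_0$ just reparameterizes $(\lambda,\alpha,\beta,\gamma)$, and one checks the measure is preserved. I expect the main obstacle to be stating this cleanly rather than any real difficulty: one must be careful that "uniform distribution over all $T\in\OO(3)$" is interpreted as Haar measure (so that right-invariance is available), and that $Enc$ genuinely depends on $\mX$ only through the group action (so the substitution is legitimate and no Jacobian factors appear because the measure, not Lebesgue measure on $\R^{3m}$, is what is integrated against). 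A secondary subtlety worth a sentence is that the finite-sample estimator $Sym_{T_1,\dots,T_k}$ is invariant only in expectation, not pathwise — the proof makes clear this is intrinsic, since for a fixed draw $T_1,\dots,T_k$ the sets $\{T_i\}$ and $\{T_iT_0\}$ differ.
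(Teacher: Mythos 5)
Your proof is correct, and it shares the paper's overall skeleton while handling the key step by a genuinely different (and cleaner) mechanism. Both arguments reduce, via linearity of expectation and the fact that the $T_i$ are i.i.d., to showing that a single uniformly drawn $T$ satisfies $\mathbb{E}_T[Enc(G,\mZ,T(\mX))]=\mathbb{E}_T[Enc(G,\mZ,T\circ T_0(\mX))]$, which follows once $T\circ T_0$ is shown to be equal in distribution to $T$; the difference lies in how that distributional identity is obtained. The paper goes through its explicit parameterization from Lemma~\ref{lemma:ortho}: it writes $T=(-1)^{\lambda}O_{x_1}(\alpha)O_{x_2}(\beta)O_{x_3}(\gamma)$ with $\lambda\sim\mathrm{Bern}(0.5)$ and uniform angles, represents $T\circ T_0$ as a shift of the sign bit and of each Euler angle, and concludes $T\sim T\circ T_0$ from periodicity of $\sin,\cos$ and independence of the factors. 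You instead invoke right-invariance of the normalized Haar measure on the compact group $\OO(3)$, i.e.\ that $T T_0$ is again uniform, and then substitute. Your route is more abstract but also more robust: it needs no coordinates, and it avoids two delicate points in the paper's computation --- composition of Euler-angle factors is not literally componentwise addition of the angles (rotations about different axes do not commute), and ``uniform in $(\lambda,\alpha,\beta,\gamma)$'' is not the same measure as the uniform (Haar) distribution on $\OO(3)$ --- whereas ``uniform over $\OO(3)$ means Haar, and Haar is translation-invariant'' is exactly the right formalization of the theorem's hypothesis. What the paper's route buys is self-containedness (it reuses the decomposition lemmas already proved) and a direct correspondence with how the model actually samples transformations. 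Your closing remarks --- that no Jacobian factor arises because one integrates against the group measure rather than Lebesgue measure, and that $Sym_{T_1,\dots,T_k}$ is invariant only in expectation rather than pathwise --- are accurate and consistent with the paper's framing.
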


\begin{proof}
    Please refer to Appendix \ref{proof:invariance}.
\end{proof}

$Sym$ can learn from a variety of orthogonal transformations and achieve invariance. In practice, we fix $\lambda$ to be $\mathrm{Bern}(0.5)$ to balance rotations and non-rotations uniformly, but we parametrize $\alpha, \beta, \gamma$ with learnable von Mises (Tikhonov) distributions~\cite{mardia1975algorithm} instead of uniform. Learnable distributions help $Sym$ more efficiently sample orthogonal transformation that benefits the prediction. The von Mises parameterization indeed leads to better empirical performance than uniform, since these distributions closely approximate the wrapped normal distribution on $[-\pi, \pi]$.

% Prediction

The second prediction module takes the invariant embeddings $\mH$ produced by $Sym$ to perform message passing and predict $y$. Given the complexity of the prediction problem and to enhance model expressiveness, we also compute attention of edge features and add skip connections during message passing. Specifically, we build up on the Edge Graph Attention Network (EGAT)~\cite{10.1093/bib/bbab371} model to add edge features to the attention calculation in addition to the regular GAT. 

Specifically, after the message from each node is computed, we first calculate an attention score $a_{ij}$ over the edge between nodes $i$ and $j$. Then the representation of node $i$ in the $l+1$-th layer ($e_i^{l+1}$) is constructed as the attention-weighted average of the neighbor representations from the $l$-th layer. We show the formula in the following, where $\sigma$ represents the non-linear activation function and $\gN(i)$ represents the set of neighbors of node $i$.
\begin{align*}
    a_{ij}^l = \frac{\exp{x_{ij}^l}}{\sum_{k\in \gN(i)} \exp{x_{ik}^l}}, \quad
    e_i^{l+1} = \sigma(\sum_{j\in \gN(i)}a_{ij} e_j^l)
\end{align*}

\subsection{Computation Time Analysis}
We provide a theoretical analysis of the time complexity of \model against three other baselines including SchNet, DimeNet, and Equiformer. For SymGNN, the time complexity is $O(kn + nd^2)$, where $k$ denotes the number of sampled orthogonal transformation, $n$ denotes the number of nodes, and $d$ denotes the average number of neighbors a node has. The first $kn$ term comes from the symmetrization of $k$ orthogonal transformations, the second $nd^2$ term comes from the attention operation among a node’s neighbors.
Add these two terms together we essentially get $O(nd^2)$ (empirically we set $k=6$). In comparison, SchNet is faster as it is $O(nd)$. But for DimeNet, since it considers pairwise edge interaction, the time complexity
grows at least as $O(m^2) = O(n^4)$, which makes it prohibitively slow for our graphs with thousands of nodes. For Equiformer, its exact time complexity is unknown to us. It is a transformer-based method where they change attention
to equivariant attention and linear layer to equivariant linear layer using complex tensor operations, which implies that its big-O time complexity is lower bounded by $O(nd^2)$. Empirically, we found that training Equiformer is excessively slow for our large-scale graphs. A list of analyzed time complexity can be found in Table~\ref{tab:time_complexity}. 

\begin{table}[t]
\centering
\caption{Theoretical Time Complexity.}
\vspace{4pt}
{\small % Reduces the font size
\setlength\tabcolsep{4pt} % Adjusts the space between columns
\begin{tabular}{lcccc}
\midrule
    & SymGNN & SchNet & Equiformer & DimeNet \\ \midrule
Complexity & $O(kn + nd^2)$ & $O(nd)$ & $\Omega(nd^2) $ & $O(n^4)$        \\ \midrule
\end{tabular}
}
\label{tab:time_complexity}
\end{table}

\subsection{Explanations for Structure-EB Relationship} \label{subsec:explain}
ML models have emerged as powerful tools in scientific research, and their utility can extend beyond mere predictions to explanations. This explanatory aspect is crucial because it aligns with the fundamental objective of ML for science: identifying patterns that can elude human analysis and understanding the underlying mechanisms that govern phenomena. 

To make the best use of the SymGNN model and truly bolster the scientific research of MGs, we generate explanations to better reveal the structure-EB relationship. We choose GNNExplainer as a starting point for selecting a subgraph $G_S$ with important edges. Since GNNExplainer was developed for classification problems, the cross-entropy-based objective does not apply to the regression problem of EB prediction. Therefore, we still learn an edge mask $\mM$ on all edges, but modify the objective in Equation~\ref{eq:gnnexplainer} by replacing entropy with mean squared error (MSE) as below, with $f$ representing the \model model.
\begin{align}
    &\min_{\gG} \mathbb{E}_{G_S \sim \gG} \mathrm{MSE}(f(G_S)) 
    =\min_{\mM} \mathrm{MSE}(f(\sigma(\mM) \odot G))
\end{align}
    
This regression explainer considers all edges involved in the prediction of EB for one node and assigns a score to each edge. These scores represent the importance of their corresponding edges for making the prediction. In Section~\ref{sec:explanation}, we demonstrate that the important edges identified by our explainer match the MRO insights mentioned in previous material research and possess unique topological properties.

\section{Experiments} \label{sec:experiment}
We conduct experiments by first constructing an MG dataset with energy barriers simulated by molecular dynamics. Then we apply \model to this dataset and compare its performance with other baseline models. We also perform ablation studies of the symmetrization module to show its effectiveness.

\subsection{Dataset}
\textbf{The proposed Cu64Zr36 dataset.} We employ molecular dynamics to simulate the behavior of a representative Cu64Zr36 MG subjected to shear deformation. The simulated MG system comprises 8000 atoms, generated through the conventional melting-quenching procedure. To evaluate the influence of system size, we also simulate small systems with 3000 atoms. To ensure the statistical robustness of our findings, 9 independent metallic glass samples are generated. To obtain the energy barriers of atoms, we employ the activation-relaxation technique nouveau (ARTn)~\cite{barkema1996event, cances2009some} to calculate the energy barriers. The simulated results are used to construct a dataset consisting of nine graphs. Among them, six graphs are used for training, one graph is for validation, and two graphs are for testing. Each training/validation graph has 8,000 nodes and roughly 260,000 edges, and each test graph has 3,000 nodes and roughly 100,000 edges. In Appendix~\ref{app:data_collect}, we provide a detailed description of the dataset construction process including the simulation temperature, pressure control, cooling rates, ARTn saddle point search algorithms, edge construction threshold, and etc.

\textbf{Cu-Zr MGs from~\cite{wang2020predicting}}
We also tested our method on other metallic glass dataset. We adopted the dataset proposed in one of the previous work~\cite{wang2020predicting}, which includes two additional Cu-Zr type metallic glasses Cu80Zr80 and Cu50Zr50. For each type of material the dataset contains two graphs each with 5000 nodes and around 650000 edges. We picked one as training graph and the other one for testing.
% \sz{add a short dataset description, introduce Cu50Zr50 and Cu80Zr20}

\subsection{Experiment Settings}
\textbf{Baselines:} We evaluate our model against a variety of other ML models including Graph Convolutional Network (GCN)  \cite{kipf2017semisupervised} withe edge features, E(n) Equivariant GNN (EGNN) \cite{satorras2022en} that are designed to handle equivariant features, a non-graph based multi-layer perceptron (MLP) model, and various invariant baselines that are proposed to handle molecular data including SchNet~\cite{schutt2017schnet}, MGCN~\cite{lu2019molecular}, DimeNet~\cite{gasteiger2020directional}, Torch-MD Net~\cite{tholke2022torchmd}, Equiformer~\cite{liao2022equiformer}, and FAENet~\cite{duval2023faenet},. Furthermore, we perform two ablation studies named SymGNN w/o symmetrization where we remove the symmtrization layer and Data Augmentation where we only aggregated the embedding from three fixed orthogonal transformation instead of a learned one. In addition we have include a simple baseline in which we use the absolute length of edge instead of its 3D coordinates as an input edge feature to achieve invariance. We also compared to MD based local sampling approximation that is widely used by material scientists \cite{krishnan2017enthalpy, sastry1998signatures}. 
% \sz{rebuttal baselines}

% Furthermore, we perform an ablation study where the symmetrizing module is removed. 

% The average evaluation scores on the test graphs and train graphs with the model that achieves the highest scores on the validation graph over multiple runs are reported in table

\textbf{Evaluation:} 
The predicted energy barriers are evaluated by the Pearson product-moment correlation coefficient against the true values following from previous work in material science literature~\cite{2020NatPh..16..448B}. We run each experiment 4 times with different random initializations. On our Cu64Zr36 dataset, we use the validation set to determine the best model and compute the score with the best model on the test set. For the other Cu-Zr MGs dataset, we compute the test accuracy on the final epoch since there is no validation set.
% The final is the mean and standard deviation of these scores on the test/train set from different runs, and is reported in table~\ref{tab:exp_result}.

\textbf{Implementation:}
For our proposed Cu64Zr36 dataset, we train 4-layer GNNs for 20,000 epochs using an Amsgrad optimizer \cite{reddi2019convergence} with a learning rate of 0.0001. We adopt an early stopping scheme if the model's prediction score on the validation set did not improve for 1000 epochs. For the other Cu-Zr MGs dataset, we train each model to a fixed number of epochs as there is no validation set. For those models that have smaller scale and faster convergence, i.e., MLP, GCN, EGNN, EGAT, SchNet, and MGCN, we train to 5000 epochs. For SymGNN, we trained the model to 10000 epochs for better convergence. In all the datasets for SymGNN, the distribution over the angles $\alpha, \beta$, and $\gamma$ is parameterized by the von Mises (Tikhonov) distribution.

% Following the physical intuition that atoms that are too far away from a given node should not play a significant role in determining the energy barrier, we chose a 4 layer network for each of the model we are evaluating. 

% This choice thus only keeps the nodes within a range of $2 \times 10^{-9}$ meters within the computation graph, and therefore leaving only the important information for the model to consider. 

% Old version
% The task is to predict the energy barrier of the atoms, where it is evaluated by the Pearson product-moment correlation coefficient. The features we are leveraging include the nodes' atom types and the vector position of the edges derived from the difference of the two endpoints' positions, as in the physics theory, those features and the graph structure are the only quantities needed for one to accurately predicts the energy barrier. The evaluation metric closely follows the previous works \cite{2020NatPh..16..448B}

\subsection{Prediction Results}

\begin{table*}[t]
\caption{Testing scores of the molecular dynamics (MD) method and machine learning (ML) methods. Test results are with the best model on the validation set. Our \model significantly outperforms the MD method and achieves the best among all the ML methods. }
% In the table, $-$ means that the training takes too long, i.e longer than 2 days.}
\label{tab:exp_result}
\center
\begin{tabular}{@{}lllll@{}}
\toprule
    & Methods &  Cu64Zr36 & Cu80Zr20 & Cu50Zr50 \\ \midrule
    MD & Local Sampling~\cite{sastry1998signatures} &  $ 0.3614 $  & $ - $ & $ - $\\ \midrule
    \multirow{2}{*}{Non-Invariant ML}
    & MLP & $0.0575 \pm 0.0127$ & 0.0727 $\pm$ 0.0154 & -0.0652 $\pm$ 0.0099\\ 
    % & GCN & $0.1322 \pm 0.0060$& $0.3063 \pm 0.0084$ \\
    & GCN with Edge Features & $0.5123 \pm 0.0507$ & 0.2478 $\pm$ 0.0051 & 0.1395 $\pm$ 0.0068\\ \midrule
    \multirow{6}{*}{Invariant ML}
    & E(n) Equivariant GNN & $0.2588 \pm 0.0077$ & 0.1382 $\pm$ 0.0113 & 0.1381 $\pm$ 0.0098\\
    & EGAT (Edge Length as 1D Feature)& $0.7264 \pm 0.0063$ & 0.5489 $\pm$ 0.0218 & 0.1571 $\pm$ 0.0095\\
    & SchNet & $0.7588 \pm 0.0088$ & 0.2505 $\pm$ 0.0128& 0.1808 $\pm$ 0.0106\\
    & MGCN & $0.7352 \pm 0.0066$ & 0.1793 $\pm$ 0.0133 & 0.1596 $\pm$ 0.0033\\
    & FAENet & $0.6603$ $\pm$ 0.0218 & 0.2947 $\pm$ 0.0171 & 0.2214 $\pm$ 0.0160\\
    % & DimeNet & $ - $ & $-$ & $-$\\
    % & Torch-MD Net & $ - $ & $-$ & $-$ \\
    % & Equiformer & $ - $ & $-$ & $-$ \\
    \midrule
    \multirow{3}{*}{Ours}
    & SymGNN  & \textbf{0.7859 $\pm$ 0.0056} & \textbf{0.6084 $\pm$ 0.0167} & \textbf{0.5862 $\pm$ 0.0277}\\
    & SymGNN w/o symmetrization & $0.2669 \pm 0.0371$ & 0.2283 $\pm$ 0.0256 & 0.1135 $\pm$ 0.0129\\ 
    & Data Augmentation & $0.6614$ $\pm$ 0.0285 & 0.3304 $\pm$ 0.0201 & 0.2135 $\pm$ 0.0337\\
\bottomrule
\end{tabular}
% \label{tab:exp_result}
% \vskip -0.2in
\end{table*}

We report the results of \model and other baselines in Table~\ref{tab:exp_result}. It can be seen from the table that \model outperforms the baselines by a large amount and exhibits a much stronger generalization power. When we remove the symmetrization module, (i.e. SymGNN w/o symmetrization), the ablated model cannot generalize well, and a similar performance drops is observed when we only aggregates embedding from three fixed orthogonal transformations. This demonstrates the effectiveness of the symmetrization module. Also we observe that models capable of handling invariance can lead to much better result compared to the ones that cannot, which again highlights the importance of symmetrization module in achieving good prediction performance. In addition, we also ran Equiformer, Torch-MD Net, and DimeNet as our baselines. However, we noticed that the training time for these methods are prohibitively long (i.e longer than 2 days) on our dataset.

\begin{table}[h!]
\centering
\caption{Training time comparison for one epoch.}
\vspace{10pt}
\begin{tabular}{lcccc}
\midrule
\textbf{}    & SymGNN & SchNet & Equiformer & DimeNet \\ \midrule
Time & 3 secs          & 1 sec           & 82 mins             & -                \\ \midrule
\end{tabular}
\label{tab:training_time}
\end{table}

\begin{table}[t]
\centering
\caption{Inference time comparison on an MG with 3,000 atoms.}
\vspace{10pt}
\begin{tabular}{lccc} % Three centered columns
\midrule
& SymGNN & ARTn & MD local sampling \\ % The headers
\midrule
Time & 0.26 seconds & 41 days &  150 mins \\ % The row with content
\midrule
\end{tabular}
\label{tab:inference_time}
\end{table}

\subsection{Computation Time Analysis}
We notice that SymGNN reaches high performance without dramatically increase both the training cost or the inference cost. Table~\ref{tab:training_time} provides an empirical time comparison for the time needed to train the model for one epoch. We observe that DimeNet would run indefinitely for our larger graph, and the time taken by Equiformer is also prohibitively long. Table~\ref{tab:inference_time} shows a inference time comparison. Compared to traditional MD simulation, our ML-based approach needs much fewer computation resources and is much more efficient. For precise MD simulation with ARTn, the calculation of the energy barrier for each atom takes around 20 minutes in a supercomputer with 16 parallel threads. Therefore, for a MG system that has the size of our test graph, i.e., 3,000 atoms, the computation will take $\frac{20 \times 3000}{60 \times 24} \approx 41$ days. Even for the much faster and less inaccurate local sampling method, the inference time for this MG system can take 150 minutes. In contrast, SymGNN's inference time on the test graph is almost negligible.

\section{Explanation and Analysis}\label{sec:explanation}
\begin{figure}[t]
    \centering
    \includegraphics[width=0.48\textwidth]{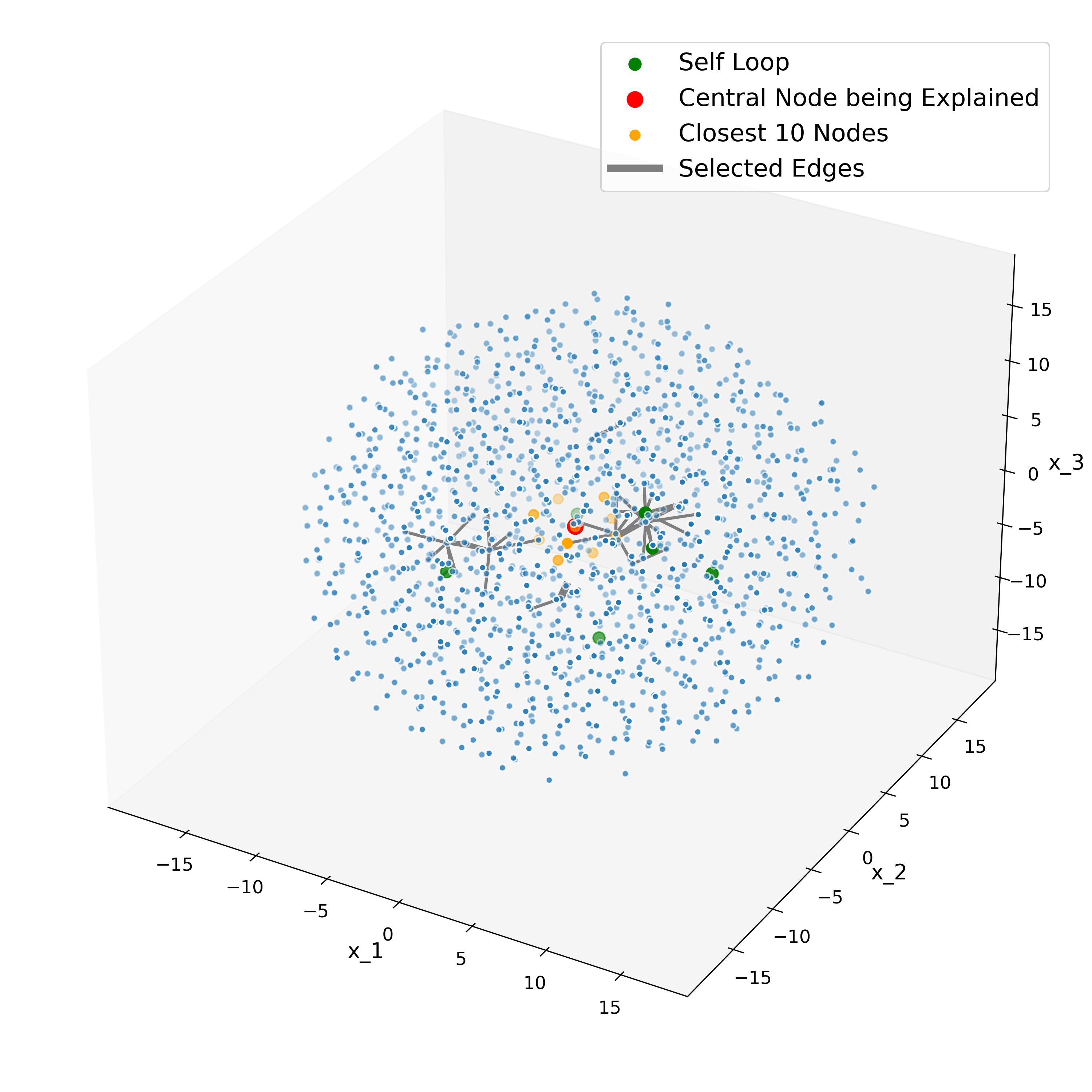}
    % \caption{Global Version of the Explanation}
    % \label{fig:global}
    \caption{Global explanation visualization. We notice that many of selected edges are of 2 hop neighborhood from the central node.}
    \label{fig:global}
\end{figure}

\begin{figure*}[t]
    \centering
    \begin{minipage}[b]{0.64\textwidth}
        \centering
        \includegraphics[width=\textwidth]{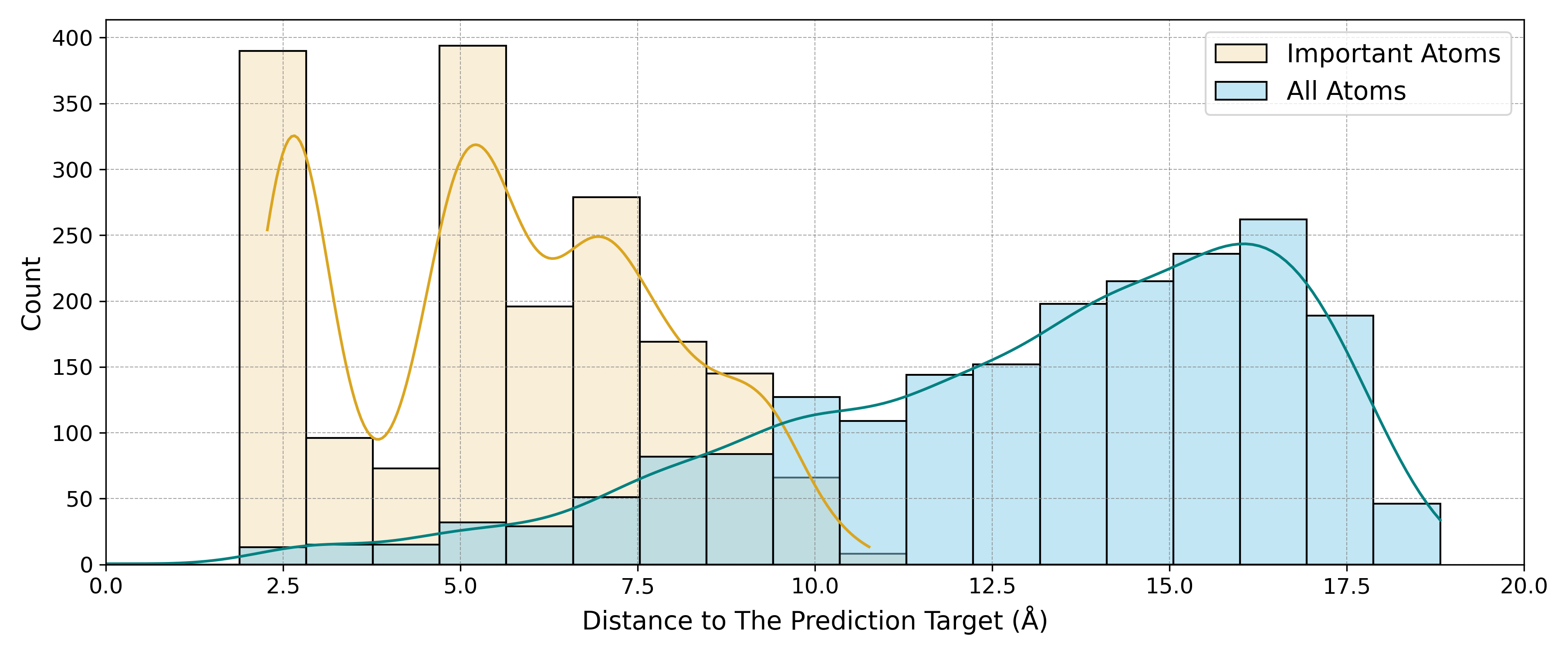}
        % \caption{Global Version of the Explanation}
        % \refstepcounter{figure}
        % \label{fig:high_mid_low}
    \end{minipage}\hfill
    \begin{minipage}[b]{0.36\textwidth}
        \centering
        \includegraphics[width=\linewidth]
        {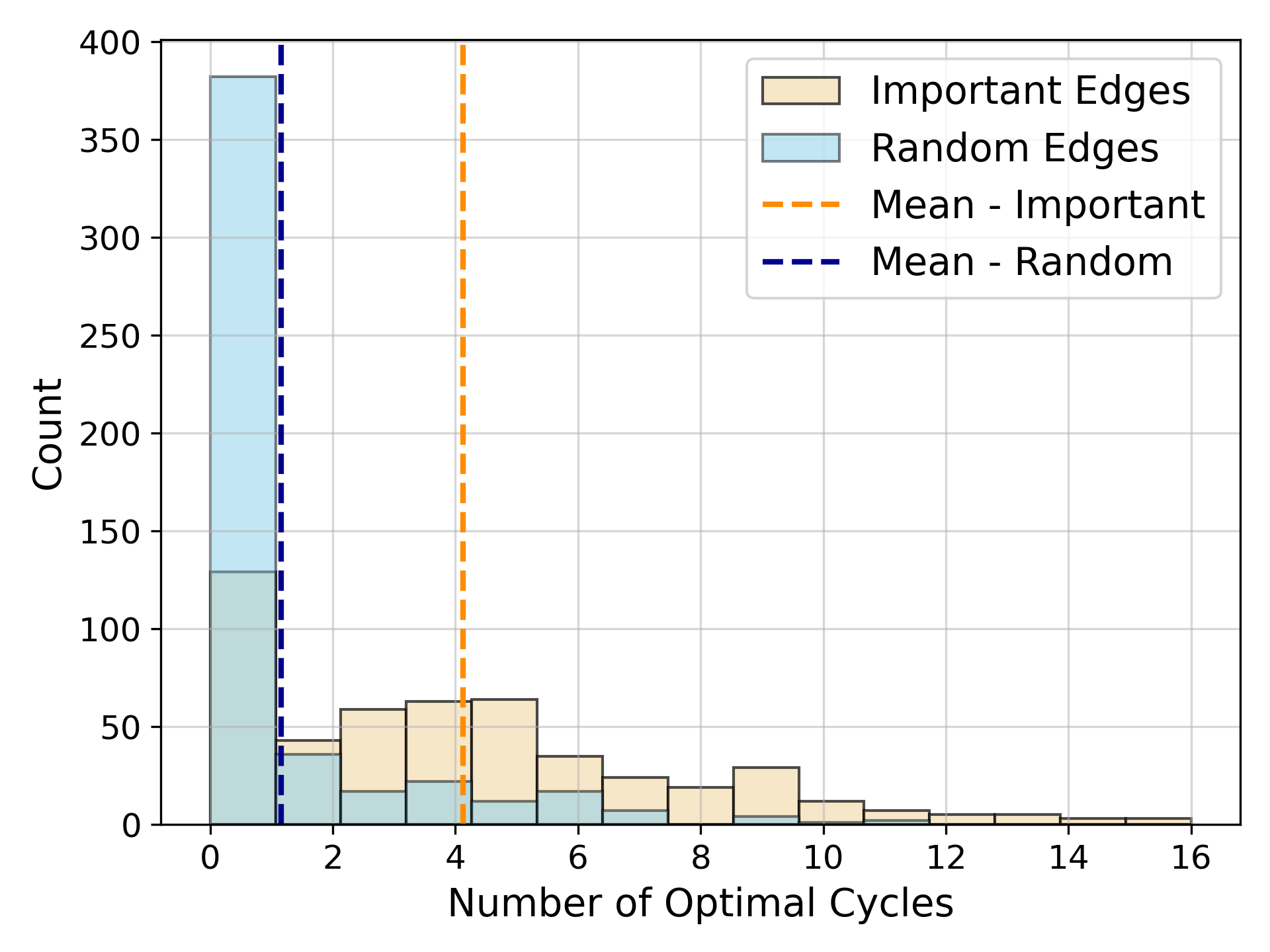}
        % \caption{Local Version of the Explanation}
        % \refstepcounter{figure}
        % \label{fig:rand}
    \end{minipage}
    \vspace{-10pt}
    \caption{(a) Distribution of distance to the prediction target (central node) of important atoms identified by our explanation vs. all atoms. (b)Distribution of the number of cycles involved in important edges identified in our explanation vs. randomly selected edges.}
    \label{fig:explain}
\end{figure*}

We produce explanation using the method in Section~\ref{subsec:explain}. We first provide visualizations to qualitatively show our explanation, and then we do a quantitative evaluation by connecting of our explanation to the MRO hypothesis and the topological data analysis (TDA) to reveal more insights.

\subsection{Explanation Visualization and MRO}
We visualize our explanation for a randomly sampled node. We provide both the global and the local version. All atoms are plotted in their actual 3D coordinates. The global version of the explanation is presented in Figure~\ref{fig:global}, where we visualize the top 50 important edges. The local version is in Appendix Figure~\ref{fig:1501_explanations}, where we zoom in to the top 10 closet nodes. From the visualizations, we see that edges close to the central node or far away from the central node may both be selected. Also, as we can see from the local version of the explanation, few edges within the top 10 closet edges is selected by our explanation. 

Material scientists proposed an MRO hypothesis, which basically says the atoms that lie in a range of medium distance (5 - 10 \text{\AA}) from the central atom play a more important role in determining its MG properties~\cite{ma2009power, sheng2006atomic, nomoto2021medium, egami2023medium}. We aggregate our selected edges to nodes (atoms) and plot the atom importance against their distance to the central node in Figure~\ref{fig:explain} (a). We see there are two modes of the important atoms, one for the closest ones and one for the medium-range ones, which matches the MRO hypothesis.

\subsection{Edge Importance vs. TDA}

\begin{table}[t]
\centering
\caption{Average number of optimal cycles participated by edges with high, medium, and low importance according to the explanation and randomly selected edges.}
\vspace{10pt}
\resizebox{\columnwidth}{!}{
\begin{tabular}{lcccc} % Three centered columns
\midrule
Edge Importance & High & Medium & Low & Random \\ % The headers
\midrule
Avg \# Optimal Cycles & 4.130 & 1.202 & 0.874 & 1.148 \\ % The row with content
\midrule
\end{tabular}
}
\label{tab:num_cycles}
\end{table}

We perform TDA to further understand our edge importance explanations and see if the results recover meaningful topological structures.

\textbf{Persistent Homology (PH)}
PH is a widely used TDA method, a field of study that applies concepts from algebraic topology to data analysis~\cite{barannikov1994framed, zomorodian2004computing, edelsbrunner2008persistent}.  PH examines how topological features such as connected components, holes, or voids, emerge and disappear as one moves through different scales in a dataset. The persistence of certain topological features across scales can reveal important insights about the underlying structure of the data, making PH a powerful tool for material science. This method has also been applied to metallic glass to uncover important topological structures \cite{sorensen2020revealing}.
In PH, the concepts of ``birth'' and ``death'' are the essential quantities we would like to study, which visually represent the lifespan of topological features in a dataset. ``Birth'' refers to the scale at which a feature, like a connected component or a hole, first appears during the filtration process, while ``death'' denotes the scale at which this feature disappears or merges. 

% It involves building a filtration of simplicial complex from the data and then analyzing how these spaces change as one varies a certain parameter. 

% analyzing complex, high-dimensional datasets in various fields 
% like image analysis, signal processing, and bioinformatics \cite{li2017persistent, adcock2013ring}. T 

\textbf{Explanation Results}
In our case, we apply PH to study the emergence and death of 1D hole as we increase the radius of a ball surrounding each atom. We perform the inverse analysis to pair the hole with a representative optimal cycle~\cite{obayashi2018volume}. In this way, each edge in the graph can be associated with a sequence of births and deaths of the cycles that it has participated in. We perform statistical analysis to see if there is significant difference between selected edges by our explanation and other edges. We plot and compare the distribution of the number of optimal cycles involved in the highest importance edges selected by our explanation versus randomly selected edges over multiple central nodes that are being explained in Figure~\ref{fig:explain} (b). We found that on average the importance edges participates in much more cycles compared to others, and there is a clear trend in the decrease in cycle number as the importance of edges decrease. The mean of the number of cycles involved in by edges in the four different group can be found in Table~\ref{tab:num_cycles}.

% PH for MG analysis
% \cite{sorensen2020revealing}

% PH
% ~\cite{barannikov1994framed, zomorodian2004computing, edelsbrunner2008persistent}

% \section{Limitation}\label{sec:limitation}
% \input{sections/limitation}

\section{Conclusion}\label{sec:conclusion}
In this paper, we study the connection between the local atomic structures of MGs and their EBs of the energy landscape. We formalize this problem as node regression on graphs and propose \model to solve the problem by effectively capturing the invariance of orthogonal transformations of the graph. We compare \model with several baseline models and demonstrate that \model performs the best. In addition, we extend the GNNExplainer to regression tasks and generate explanations. We further investigate the explanations with MRO and PH. We show a strong correlation between the importance of edge and the number of optimal cycles they involved in. Our work enables effective prediction and interpretation of MG EBs, bolstering material science research.

% For the prediction model, as of now \model is only invariant on expectation, we plan to further explore the capability of approximately invariant method compared to state of the art models. For the explanation, we plan to conduct more explorations in terms of various across-node statistics in order to achieve statistically significant and physics informed result. From physical intuitions, the edge that plays an important role in the prediction should exhibit more unusual behaviors other than involving in more cycles. For example, the edge being selected may have a higher-than-usual length compared to other edges that connect to the same node, or it may exhibit a more skewed angle distribution compared to the other edges. By evaluating the statistics of the distance/angle distribution across different nodes, we can better examine the coherence between physical intuition and the actual selected edges by explainer.

% Acknowledgements should only appear in the accepted version.
\section*{Acknowledgements}
This work was partially supported by NSF 2211557, NSF 1937599, NSF 2119643, NSF 2303037, NSF 2312501, NASA, SRC JUMP 2.0 Center, Amazon Research Awards, and Snapchat Gifts.

\section*{Impact Statement}\label{sec:impact}
% This study presents a pioneering approach to predicting the energy barriers of metallic glasses (MGs) through the innovative application of Graph Neural Networks (GNNs), marking a significant advancement in the intersection of material science and machine learning. Our work not only curates a comprehensive dataset of MGs with accurately simulated energy barriers using the ARTn method but also translates the complex problem of MG energy barrier prediction into a manageable machine learning framework of node regression on graphs. The introduction of a novel GNN model, specifically tailored for this application, demonstrates exceptional predictive accuracy while approximately maintaining E(3)-invariance, a critical factor for good model generalizability. Furthermore, our methodology extends beyond mere predictions by providing interpretable insights into the model's decisions, revealing unique topological characteristics within MG structures that pave the way for groundbreaking 

% scientific discoveries. This research not only sets a new benchmark for predictive accuracy in the field but also enhances our fundamental understanding of MGs, offering a powerful tool for future explorations and innovations in material design and application.

This paper presents work whose goal is to advance the field of Machine Learning, especially on the area AI for material science. There are many potential societal consequences of our work, none of which we feel must be specifically highlighted here.

% In the unusual situation where you want a paper to appear in the
% references without citing it in the main text, use \nocite
\nocite{langley00}

\bibliography{reference}
\bibliographystyle{icml2024}

%%%%%%%%%%%%%%%%%%%%%%%%%%%%%%%%%%%%%%%%%%%%%%%%%%%%%%%%%%%%%%%%%%%%%%%%%%%%%%%
%%%%%%%%%%%%%%%%%%%%%%%%%%%%%%%%%%%%%%%%%%%%%%%%%%%%%%%%%%%%%%%%%%%%%%%%%%%%%%%
% APPENDIX
%%%%%%%%%%%%%%%%%%%%%%%%%%%%%%%%%%%%%%%%%%%%%%%%%%%%%%%%%%%%%%%%%%%%%%%%%%%%%%%
%%%%%%%%%%%%%%%%%%%%%%%%%%%%%%%%%%%%%%%%%%%%%%%%%%%%%%%%%%%%%%%%%%%%%%%%%%%%%%%
\newpage
\appendix
\onecolumn
\section{Appendix} \label{app}
\subsection{The EB Prediction Problem}

In this section, we present a succinct high-level motivation for our work. Material scientists aim to use atomic structures of materials to predict their properties, such as ductility. However, this direct prediction is a challenging task. An alternative approach is to use an easier-to-predict intermediate quantity, e.g. the energy barriers, as a stepping stone. In other words, there is a shift from the paradigm 1:
$$\textbf{structures} -{predict} \to \textbf{properties}$$
to paradigm 2:
$$\textbf{structures} -predict \to \textbf{energy barriers} -analyze \to \textbf{properties}$$
The second paradigm has shown promising results, which is the focus of this work.

\subsection{Proof of Lemma~\ref{lemma:non-rotation}} \label{proof:non-rotation}
In this section, we prove the non-rotation decomposition Lemma~\ref{lemma:non-rotation} stated in Section~\ref{subsec:symmetrization}.

\begin{proof}
    From Lemma~\ref{lemma:decompose}, we know any non-rotation $\tilde{R} \in P\cdot \mathrm{SO}(3)$ has $\det{\tilde{R}} = -1$. By linearity of $\tilde{R}$ we know that
    \begin{align*}
        \langle -\tilde{R}(\vx^1), -\tilde{R}(\vx^2) \rangle &= \langle \tilde{R}(-\vx^1), \tilde{R}(-\vx^2) \rangle = \langle -\vx^1, -\vx^2 \rangle = \langle \vx^1, \vx^2 \rangle
    \end{align*}
    which shows hat $-\tilde{R}$ is also orthogonal according to Definition~\ref{def:ortho_trans}. As we know that $-\tilde{R}$ will have $\det{-\tilde{R}} = (-1)^3\cdot -1 = 1$, by Lemma~\ref{lemma:decompose} again we know that $-\tilde{R}$ is a rotation. By Euler Theorem~\ref{thm:euler}, we know there exists $[\alpha, \beta, \gamma] \in [0, 2\pi]^3$ such that $-\tilde{R} = O_{x_1}(\alpha)O_{x_2}(\beta)O_{x_3}(\gamma)$, which implies that $\tilde{R} = -O_{x_1}(\alpha)O_{x_2}(\beta)O_{x_3}(\gamma)$.
    % We know by elementary linear algebra that an orthogonal matrix $A \in \mathbb{R}^{3\times 3}$ satisfies $A^{-1} = A^T$. Therefore, the determinant satisfies
    % \begin{equation}
    %     \det(A)^{-1} = \det(A^{-1}) = \det(A^T) = \det(A)
    % \end{equation}
    % This implies that $\det(A)^2 = 1$. As $A$ has real entries, $\det(A) \in \mathbb{R}$. So the only two possibilities for the determinant is $\det(A) = \pm 1$. By definition, a non-rotation is thus any orthogonal matrix that has determinant $-1$. Let $T: \mathbb{R}^3 \to \mathbb{R}^3$ be an arbitrary non-rotation orthogonal transformation. As $-T$ satisfies $\det(-T) = (-1)^3\det(T) = 1$, by definition $-T \in SO(3)$. By theorem~\ref{thm:euler}, there exists angles $\alpha, \beta, \gamma$ so that $-T = O_{x_1}(\alpha)O_{x_2}(\beta)O_{x_3}(\gamma)$. Negate it, we obtain the desired angles that satisfy $T = -O_{x_1}(\alpha)O_{x_2}(\beta)O_{x_3}(\gamma)$.
\end{proof}

\subsection{Proof of Theorem~\ref{thm:invariance}} \label{proof:invariance}
In this section, we prove the invariance in expectation Theorem~\ref{thm:invariance} stated in Section~\ref{subsec:symgnn}.

\begin{theorem}
    Assume $T_1, \dots, T_k$ are random transformations that follow a uniform distribution over all $T \in \OO(3)$. Then, $Sym$ is $\OO(3)$-invariant in expectation in the sense that $\mathbb{E}_{T_1, \dots, T_k} [Sym_{T_1, \dots, T_k}(G, \mZ, \mX)] = \mathbb{E}_{T_1, \dots, T_k}[Sym_{T_1, \dots, T_k}(G, \mZ, T_0\mX)]$ for any $T_0 \in \OO(3)$.
\end{theorem}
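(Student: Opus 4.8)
The plan is to exploit the fact that averaging over a uniformly sampled group element "absorbs" any fixed group element, because the group is closed under composition and the uniform (Haar) measure on a compact group is invariant under left translation. Concretely, I would first reduce the claim about $Sym_{T_1,\dots,T_k}$ to a claim about a single uniformly sampled transformation. Since $Sym_{T_1,\dots,T_k}(G,\mZ,\mX) = \frac1k\sum_{i=1}^k Enc(G,\mZ,T_i(\mX))$ and the $T_i$ are i.i.d.\ uniform on $\OO(3)$, linearity of expectation gives
\begin{equation*}
\mathbb{E}_{T_1,\dots,T_k}[Sym_{T_1,\dots,T_k}(G,\mZ,\mX)] = \mathbb{E}_{T\sim \mathrm{Unif}(\OO(3))}[Enc(G,\mZ,T(\mX))],
\end{equation*}
so it suffices to show this single-sample expectation is unchanged if $\mX$ is replaced by $T_0\mX$ for an arbitrary fixed $T_0\in\OO(3)$.

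Next I would invoke the change-of-variables / invariance-of-Haar-measure step. Replacing $\mX$ by $T_0\mX$ turns the integrand into $Enc(G,\mZ,(T T_0)(\mX))$. Substituting $T' = T T_0$ and using that $\OO(3)$ is a group (Lemma~\ref{lemma:decompose} gives its structure, but all we need is closure and the existence of inverses) together with the left-invariance of the uniform distribution on the compact group $\OO(3)$, the law of $T'$ is again uniform on $\OO(3)$. Hence
\begin{equation*}
\mathbb{E}_{T}[Enc(G,\mZ,(T T_0)(\mX))] = \mathbb{E}_{T'}[Enc(G,\mZ,T'(\mX))],
\end{equation*}
which is exactly the right-hand side we wanted. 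Unwinding back through the $\frac1k\sum$ gives the stated equality of the two $k$-sample expectations.

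I would present the uniform distribution concretely via the parametrization from Lemma~\ref{lemma:ortho}: $T_{\lambda,\alpha,\beta,\gamma} = (-1)^\lambda O_{x_1}(\alpha)O_{x_2}(\beta)O_{x_3}(\gamma)$ with $\lambda\sim\mathrm{Bern}(1/2)$ and $(\alpha,\beta,\gamma)$ appropriately distributed to induce Haar measure, so that "uniform over $\OO(3)$" is well-defined; the key property used is just that this measure is the normalized Haar measure and is therefore bi-invariant under the group action. The cleanest exposition keeps the argument at the level of the group: define $\mu$ = Haar probability measure on $\OO(3)$, note $\int_{\OO(3)} g(T T_0)\,d\mu(T) = \int_{\OO(3)} g(T')\,d\mu(T')$ for any integrable $g$ and any fixed $T_0$, and apply this with $g(T) = Enc(G,\mZ,T(\mX))$.

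The main obstacle is not the algebra but making the measure-theoretic setup precise: one must either (i) assert that the sampling distribution over $(\lambda,\alpha,\beta,\gamma)$ is chosen so that the pushforward onto $\OO(3)$ is Haar measure, and then cite bi-invariance of Haar measure on a compact group, or (ii) verify invariance directly in the Euler-angle coordinates, which is messier because the product $O_{x_1}(\alpha)O_{x_2}(\beta)O_{x_3}(\gamma)T_0$ does not act as a simple shift on $(\alpha,\beta,\gamma)$ and the Haar density in Euler angles carries a $\sin\beta$ (or similar) Jacobian factor. I would take route (i): state the uniform/Haar assumption explicitly (it is already hypothesized in the theorem), and reduce everything to the one-line fact that left translation preserves Haar measure. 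A minor additional point to note is that $Enc$ need not itself be invariant — the result is only about the \emph{expectation}, and indeed the whole point of the theorem is that symmetrization manufactures invariance in expectation out of a non-invariant encoder.
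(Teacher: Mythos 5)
Your proposal is correct, but it takes a genuinely different route from the paper, so a comparison is worthwhile. The paper works entirely in the coordinates of Lemma~\ref{lemma:ortho}: it interprets ``uniform over $\OO(3)$'' as $\lambda\sim\mathrm{Bern}(0.5)$ and $(\alpha,\beta,\gamma)\sim\mathrm{Unif}([-\pi,\pi]^3)$, writes the composition as $T\circ T_0=(-1)^{\lambda+\lambda_0}O_{x_1}(\alpha+\alpha_0)O_{x_2}(\beta+\beta_0)O_{x_3}(\gamma+\gamma_0)$, argues each factor is equal in distribution by periodicity of sine and cosine plus independence, concludes $T\sim T\circ T_0$, and then pushes this through $Enc$ and linearity of expectation exactly as you do. You instead stay at the level of the group: a single uniformly sampled transformation absorbs the fixed $T_0$ because the normalized Haar probability measure $\mu$ on the compact group $\OO(3)$ is translation-invariant, so $\int_{\OO(3)} g(TT_0)\,d\mu(T)=\int_{\OO(3)} g(T)\,d\mu(T)$ with $g(T)=Enc(G,\mZ,T(\mX))$, and the $\frac1k\sum$ is handled by linearity. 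Your route buys robustness: as you yourself point out, right-composition by a general $T_0$ does \emph{not} act as a simple shift of the Euler angles (rotations about different axes do not commute), and the Haar density in Euler coordinates is not the uniform density on the cube (it carries a Jacobian factor), so the coordinate-level shift argument requires the Haar interpretation of ``uniform'' to be stated and treated with care --- precisely the step your formulation makes a one-line consequence of invariance of Haar measure. What the paper's route buys is self-containedness: it reuses Lemma~\ref{lemma:decompose}, Theorem~\ref{thm:euler}, and Lemma~\ref{lemma:ortho} without invoking any Haar-measure machinery. One small wording point in your write-up: the substitution $T'=TT_0$ is a \emph{right} translation, so you should cite right- (or bi-) invariance rather than left-invariance; since $\OO(3)$ is compact its Haar measure is bi-invariant, so nothing of substance changes. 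Your closing remark is also on target: no invariance of $Enc$ itself is needed, only equality of expectations, which is exactly what symmetrization is designed to manufacture.
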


\begin{proof}
    From Lemma~\ref{lemma:ortho}, we know that all $T \in \OO(3)$ have the form of $T = (-1)^{\lambda} O_{x_1}(\alpha)O_{x_2}(\beta)O_{x_3}(\gamma)$ for $\lambda \in \{0, 1\}$ and $[\alpha, \beta, \gamma] \in [-\pi, \pi]^3$. A uniform distribution over all $T \in \OO(3)$ implies $\lambda \sim \mathrm{Bern}(0.5)$ and $[\alpha, \beta, \gamma] \sim \mathrm{Unif}([-\pi, \pi]^3)$.
    
    Now consider a specific orthogonal transformation $T_0 = (-1)^{\lambda_0} O_{x_1}(\alpha_0)O_{x_2}(\beta_0)O_{x_3}(\gamma_0)$. Then its composition with the uniformed distributed $T$ is
    \begin{align*}
        T \circ T_0 &= (-1)^{\lambda + \lambda_0}O_{x_1}(\alpha + \alpha_0)O_{x_2}(\beta + \beta_0)O_{x_3}(\gamma + \gamma_0)
    \end{align*}
    By the Bernoulli assumption, we get $(-1)^{\lambda} \sim (-1)^{\lambda + \lambda_0}$ as they both follow a discrete distribution on $\{-1, 1\}$ with probability $0.5$ of each value. Moreover, $\alpha \sim \mathrm{Unif}[-\pi, \pi]$ implies $\alpha + \alpha_0 \sim \mathrm{Unif}[-\pi + \alpha_0, \pi + \alpha_0]$. Since this only shifts the interval that supports the uniform distribution, the joint distribution of $(\cos(\alpha), \sin(\alpha)) \sim (\cos(\alpha + \alpha_0), \sin(\alpha + \alpha_0))$ due to periodicity, which further implies 
    $O_{x_1}(\alpha) \sim O_{x_1}(\alpha + \alpha_0)$. Similarly for $O_{x_2}(\beta)$ and $O_{x_3}(\gamma)$. Given the random variables (matrices) $(-1)^{\lambda}, O_{x_1}(\alpha), O_{x_2}(\beta)$, and $O_{x_3}(\gamma)$ are independent, we concluded that $T \sim T \circ T_0$.

    Now consider $Sym_{T_1, \dots, T_k}(G, \mZ, \mX) = \frac{1}{k} \sum_{i=1}^{k} Enc(G, \mZ, T_{i}(\mX))$. For each $i$, $T_i \sim T_{i} \circ T_0$ implies $T_i (\mX) \sim T_{i} \circ~T_0 (\mX)$, and thus $Enc(G, \mZ, T_{i}(\mX)) \sim Enc(G, \mZ, T_{i} \circ T_0(\mX))$ by transformation of random variables~\cite{billingsley2017probability}. Therefore, we also get $\mathbb{E}_{T_i}[Enc(G, \mZ, T_{i}(\mX))] = \mathbb{E}_{T_i}[Enc(G, \mZ, T_{i} \circ T_0(\mX))]$. Finally, by linearity of expectation, $\mathbb{E}_{T_1, \dots, T_k} [Sym_{T_1, \dots, T_k}(G, \mZ, \mX)] = \mathbb{E}_{T_1, \dots, T_k}[Sym_{T_1, \dots, T_k}(G, \mZ, T_0\mX)]$.
\end{proof}

\subsection{Analysis of expressiveness of SymGNN}\label{app:expressiveness}
In this section, we show by examples that SymGNN can capture more complex interactions between molecules compared to SchNet like methods. Consider two three molecule systems that have the following configurations where the atom type for node 2 is two whereas the atom type for node 1 and node 3 is one: 
\begin{enumerate}[label=(\alph*)]
    \item 1: $(-2,0,0)$, 2: $(0, 0, 0)$, 3:$(2, 0, 0)$ 
    \item 1: $(-\sqrt{2}, \sqrt{2}, 0)$, 2: $(0,0,0)$, 3:$(\sqrt{2}, \sqrt{2}, 0)$
\end{enumerate}
Notice that the edge distance between node 2 to each of node 1 and node 3 are two in both of these configurations, so SchNet cannot distinguish these two configurations based on the embedding of node 2, as it only takes into account the distance information. However, we shall see that SymGNN can differentiate these configurations based on node 2's embedding as it considers also higher-order information. To ease the computation, we consider a minimal setting of SymGNN where the encoder is the identity function, and each time two orthogonal transformations will be applied to the graph and then aggregated. Finally, node 2's embedding is calculated by simply summing up the message passed between node 2 and node 1 and between node 2 and node 3. Suppose the two orthogonal transformations sampled are a counterclockwise rotation in $xy$-plane by 45 degrees and a reflection around $y$-axis. It can be calculated that SymGNN will give the embedding $(0, 0, 0)$ for node two in the first configuration, but the second configuration gives $(2, 2+2\sqrt{2}, 0)$. Similar example can be given to show that SymGNN can also detect configurations that have equivalent angle structure, and thus we know SymGNN truly considers higher level information compared to those invariant methods that are based on scalerization.

\subsection{A Detailed Dataset Construction Process} \label{app:data_collect}
We employ molecular dynamics to simulate the behavior of a representative Cu64Zr36 metallic glass (MG) subjected to shear deformation. The simulated MG system comprises 8000 atoms, generated through the conventional melting-quenching procedure with varied cooling rates spanning from $10^{14}$ to $10^{10}$ $K/s$. To evaluate the influence of system size, we also simulate small system (i.e., 3000 atoms). To initiate the simulation, the sample is initially melted at 2000K under zero pressure for 1ns, facilitating the erasure of its initial configuration memory. Temperature and pressure control are maintained through the isothermal-isobaric (NPT) ensemble, employing a Nosé-Hoover thermostat~\cite{nose1984unified, hoover1985canonical}. Subsequently, the liquified state is rapidly quenched to 1K, with cooling rates ranging from $10^{14}$ to $10^{10}$ $K/s$. The resulting glassy structure is further relaxed to its local energy minimum through energy minimization, utilizing the conjugate gradient algorithm. The interatomic interactions within the system are described using the embedded-atom method (EAM) potential~\cite{mendelev2009development}. To ensure the statistical robustness of our findings, 9 independent metallic glass samples are generated for each cooling rate. A timestep of 1fs is adopted for all simulations, and the entire set of simulations is carried out using the LAMMPS package~\cite{plimpton1995fast}.

To obtain the energy barriers of atoms, we employ the activation-relaxation technique nouveau (ARTn)~\cite{barkema1996event, cances2009some} to calculate the energy barriers within MGs. Specifically, starting from a local energy minimum in the landscape, initial perturbations are introduced to a chosen atom and its nearest neighbors. This perturbation allows exploration along a direction of negative curvature, increasing the likelihood of locating a saddle point in the energy landscape. The Lanczos algorithm~\cite{barkema1996event} is then applied to guide the system to the saddle point by following the direction of negative curvature. A force tolerance of $0.05 eV/\text{\AA}^{-2}$ is chosen to ensure convergence of the saddle points. In accordance with previous investigations~\cite{fan2014thermally, fan2017energy, xu2018predicting}, 20 searches for saddle points are conducted for each atom. Consequently, the ARTn exploration focuses on determining the average energy barrier associated with atoms. This parameter is recognized as a key factor influencing the propensity for plastic rearrangement in disordered materials~\cite{tang2020bulk, tang2021energy}. The simulated raw dataset initially only contains nodes (atoms) along with their types and 3D coordinates. We construct edges between two nodes if their Euclidean distance is smaller than a threshold, which is chosen to be $5 \text{\AA} = 10^{-10}m$. 

\subsection{CuZr-Based MGs as Representative Examples}
In this section, we discuss choice of focusing on Cu-Zr based MGs in our dataset. First, Cu-Zr-based metallic glass is one of the most widely investigated MGs due to its outstanding mechanical properties and good glass-forming ability~\cite{cheng2011atomic}. Many well-known studies on MGs, such as those focusing on mechanical properties and ductility~\cite{liu2012microstructural, pauly2010transformation}, are centered on Cu-Zr. Additionally, Cu-Zr has been used as a standard MG example in ML research to study $\beta$ processes~\cite{wang2020predicting} and perform hierarchical structure analysis~\cite{hiraoka2016hierarchical}. Cu64Zr36 is known as the best glass former in this class of MGs and is commonly used as the archetype model in MD simulations~\cite{wang2020predicting}.  While other MGs are not included in this study, some common dynamic behaviors (e.g., relaxation, dynamical heterogeneity, shear band formation) are believed to be controlled by energy barriers, with structure-property relationships transferable between different MGs. 

% Moreover, our decision to focus on a dataset with 9 samples of Cu64Zr36 MGs was driven by the computational cost and complexity associated with full-scale MD simulations for MGs. These simulations are resource-intensive (41 days per sample as mentioned in the paper), and the generation of a large and diverse dataset would be prohibitive in terms of both time and computational resources. Unlike the "big-data" paradigm of other ML domains, such as NLP, where text data is abundant, these material science problems often follow a "small data" paradigm. Our 9 samples have a total of 78,000 atoms/nodes and about 2.5 million edges, which is already larger than many graph benchmarks and larger than the dataset in related MG studies. For example,~\cite{wang2020predicting} contains 8 Cu-Zr MG samples with 60,000 atoms in total. Therefore, 9 samples is actually the largest public dataset we know for energy barrier prediction.

\subsection{Abalation on Dataset Splits}
To check our method's robustness under different dataset configurations, we performed an ablation study with two more random dataset splits. These splits use different sets of randomly sampled training graphs, one randomly sampled graph for validation, and rest for testing. The result can be found on Table~\ref{tab:both_datasets}. We see that SymGNN consistently outperforms SchNet in all the dataset splits we considered.

\begin{table}[ht]
\centering
\begin{tabular}{lccc}
\toprule
& Model       & Train  & Test   \\
\midrule
Original    & SymGNN & 0.8368 & \textbf{0.7859} \\
            & SchNet & 0.7858 & 0.7588 \\
\midrule
New Split 1 & SymGNN & 0.8600 & \textbf{0.7613 }\\
            & SchNet & 0.7778 & 0.7583 \\
\midrule
New Split 2 & SymGNN & 0.8362 & \textbf{0.7645} \\
            & SchNet & 0.7070 & 0.6948 \\
\bottomrule
\end{tabular}
\caption{Performance comparison across original and new dataset splits.}
\label{tab:both_datasets}
\end{table}

% \begin{table}
% \begin{center}
% \begin{tabular}{lccc}
% \toprule
% & Model   & Train  & Test   \\
% \midrule
% Cu50Zr50 & SymGNN & 0.8923 & 0.4815 \\
%          & SchNet & 0.5636 & 0.1825 \\
% Cu80Zr20 & SymGNN & 0.8662 & 0.6306 \\
%          & SchNet & 0.5969 & 0.2712 \\
% \bottomrule
% \label{tab:new_data}
% \end{tabular}
% \caption{Comparison of SymGNN and SchNet on other types of materials}
% \end{center}
% \end{table}

\subsection{Ablation on Number of Orthogonal Transformations}
In our experiments, we found that the model performance is relatively stable with respect to the number of transformations. We performed an ablation study with different number of aggregated orthogonal transformations. We presented experiments results with 2, 4, 6, and 12 transformations. Although using 12 transformations led to out-of-memory (OOM) issues, we found that both 2 and 4 transformations yielded effective results, with the performance using 2 transformations even being slightly better than our reported results. We hypothesize that as long as we are sampling various orthogonal transformations from a reasonable distribution, the framework can benefit from the symmetrization module and achieve good results. The number of transformations mostly influences the convergence time rather than the performance. For example, using 2 transformations required 3000 epochs for convergence, while 4 transformations required 2300 epochs. This observation aligns with our expectations, as fewer transformations necessitate more iterations for the model to capture the necessary information from the data. The results can be found in Table~\ref{tab:number_ortho}. 

\begin{table}[h!]
\centering
\begin{tabular}{cccccc}
\toprule
Number of Transformations & 0      & 2      & 4      & 6      & 12 \\ 
\midrule
Train                     & 0.8736 & 0.8302 & 0.8072 & 0.8368 & OOM \\ 
Test                      & 0.2669 & 0.7901 & 0.7778 & 0.7858 & OOM \\ 
\bottomrule
\end{tabular}
\caption{Performance ablated on different number of orthogonal transformations}
\label{tab:number_ortho}
\end{table}

\subsection{Comparison with Data Augmentation}\label{subsec:data_aug}

Empirically, we find that our symmetrization module can learn a condensed subspace of the orthogonal transformations corresponding to the task, which allows more effective aggregations. In this section, we provide an analysis with the subspace learned by SymGNN. We report the mean and concentration for each distribution that controlled one Euler angle after the training. For rotations, we have
$$(\mu_i, \kappa_i) = (-0.3414, 0.2985), (0.7023, 0.9146), (-1.5622, 1.3543)$$ and $$(\mu_i, \kappa_i) = (-0.4102, 3.0350), (-0.4946, 3.0645), (-0.7043, 0.1533)$$ for the rest. We perform an estimation of how much volume we need in order to capture 80 percent of the whole probability density. We notice that Von Mises distribution with a bigger $\kappa$ is approximately a Gaussian distribution with variance $\frac{1}{\kappa}$, and a Von Mises distribution with a smaller $\kappa$ is approximately a uniform distribution. Therefore, we approximate the estimation using a similar density-volume estimation with four Gaussian random variables $\mathcal{N}(-1.5622, \frac{1}{1.3543}), \mathcal{N}(-0.4102, \frac{1}{3}), \mathcal{N}(-0.4946, \frac{1}{3}), \mathcal{N}(0.7023, \frac{1}{0.9146})$, and two uniform distributions between $[-\pi, \pi]$. We found that for these four Gaussian distributions, intervals of length $2.21, 3.37, 3.37, 4.01$ around the mean can approximately yield $94.5$ percent of density. Therefore, in total it will yield $0.945^4 \approx 0.8$ of density. By picking all the uniform distribution, we know that at least $80$ percent of the density can be included by less than $7$ percent of the density. This analysis implies the effectiveness of the learning. We should note that this estimation is rough and an overestimation, since in reality the distributions that are approximated as uniform are also more centered.

\subsection{Explanation on SchNet}

We compare explanations generated with different models to demonstrate the effectiveness of both our model and our explanations. We ran GNNExplainer on our strongest baseline, SchNet, and found both similarities and differences in the explanation outputs compared to ours. On one hand, the explanations are similar to the explanations generated with our model, with the most important edges being of mid-range distance to the central node (as in Figure 3). On the other hand, the comparison to TDA shows a clear difference. As we discussed in Section 6.1, we compare the edge importance identified by ML explanations to the edges captured by the TDA optimal cycles. A comparison between our explanations and SchNet explanations is shown in the table below. We observe that for SchNet explanations, there is no significant difference between the number of optimal-cycle edges participated in by those high importance edges and low importance edges (high vs. low = 1.312 vs. 1.24), whereas ours is very significant (high vs. low = 4.13 vs. 0.874). We hypothesize that this difference is due to the lack of expressivity in SchNet, as it utilizes only edge distance information and not any higher-order information, such as angles. The result can be found in Table~\ref{tab:avg_cycles_schnet}.

\begin{table}[h!]
\centering
\begin{tabular}{lcccc}
\toprule
Ave Number of Cycles & High & Medium & Low & Random \\ 
\midrule
SymGNN              & 4.130 & 1.202 & 0.874 & 1.148 \\
SchNet              & 1.312 & 1.036 & 1.240 & 1.074 \\
\bottomrule
\end{tabular}
\caption{Average number of cycles involved with high/medium/low impact edges in SchNet.}
\label{tab:avg_cycles_schnet}
\end{table}

% which is chosen to be $5 \text{\AA} = 10^{-10}m$ for better prediction performance. 
\subsection{Explanation Visualization}
In this section, we provide more visualizations of the explanation. First, we show the pairing local explanation result for the node shown in Figure~\ref{fig:global} in Figure~\ref{fig:1501_explanations}. We note that none of the edges within the top 10 closest nodes are being selected as important edges by our explanation. Visualizations of more nodes are shown in Figure~\ref{fig:more_explanations}. In the figures, the left column presents the global version plot whereas the right column presents the local version.

\begin{figure}[htbp]
    \centering
    % Left image
    \begin{minipage}{0.44\textwidth}
        \centering
        \includegraphics[width=\linewidth]{figures/1501_global.png}
        % No caption here
        \label{fig:1501_local}
    \end{minipage}
    % Right image
    \begin{minipage}{0.44\textwidth}
        \centering
        \includegraphics[width=\linewidth]{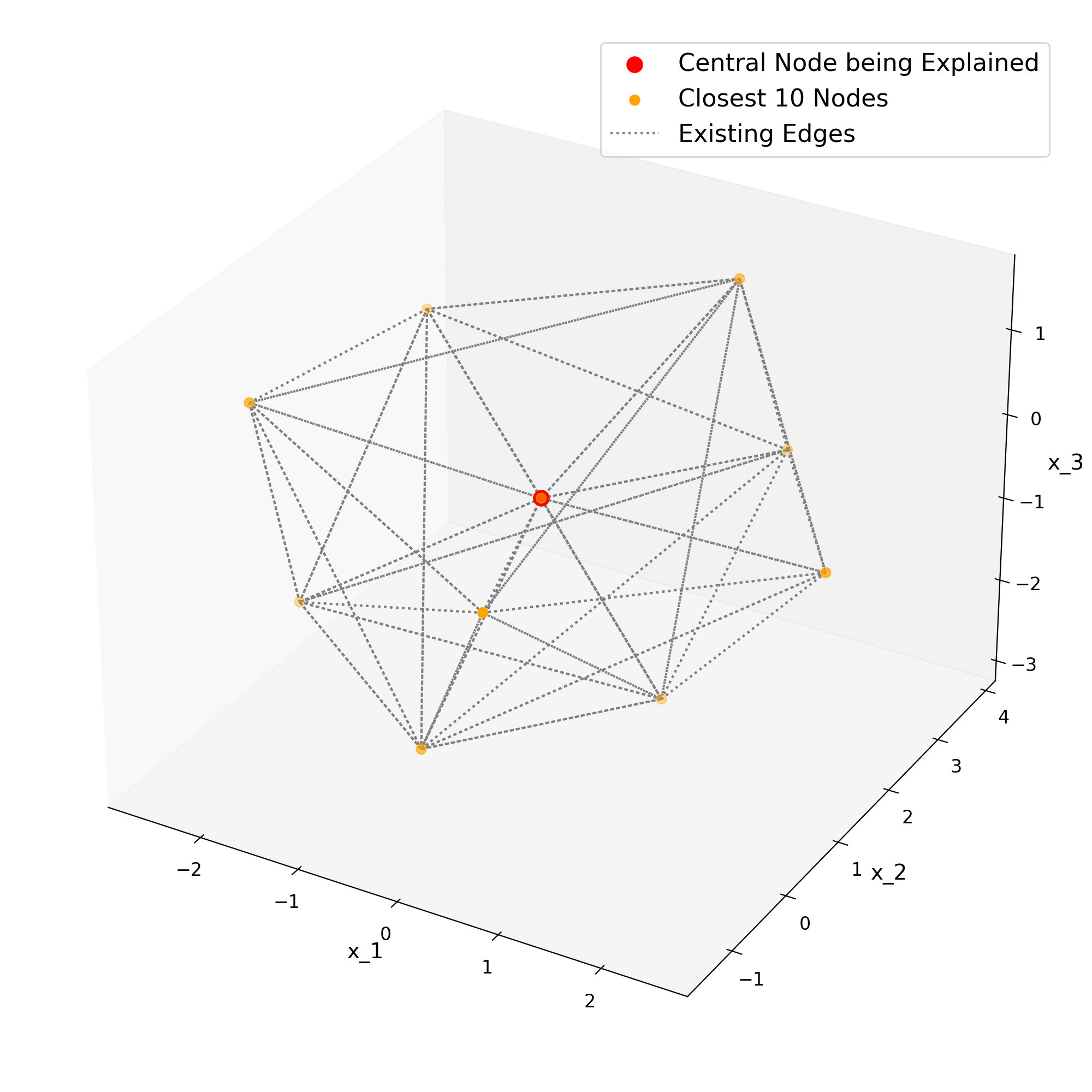}
        % No caption here
        \label{fig:1501_global}
    \end{minipage}
    \caption{Local and global explanation visualizations node 1501.}
    \label{fig:1501_explanations}
\end{figure}

\begin{figure}[htbp]
    \centering
    % First row, first image
    \begin{minipage}{0.46\textwidth}
        \centering
        \includegraphics[width=\linewidth]{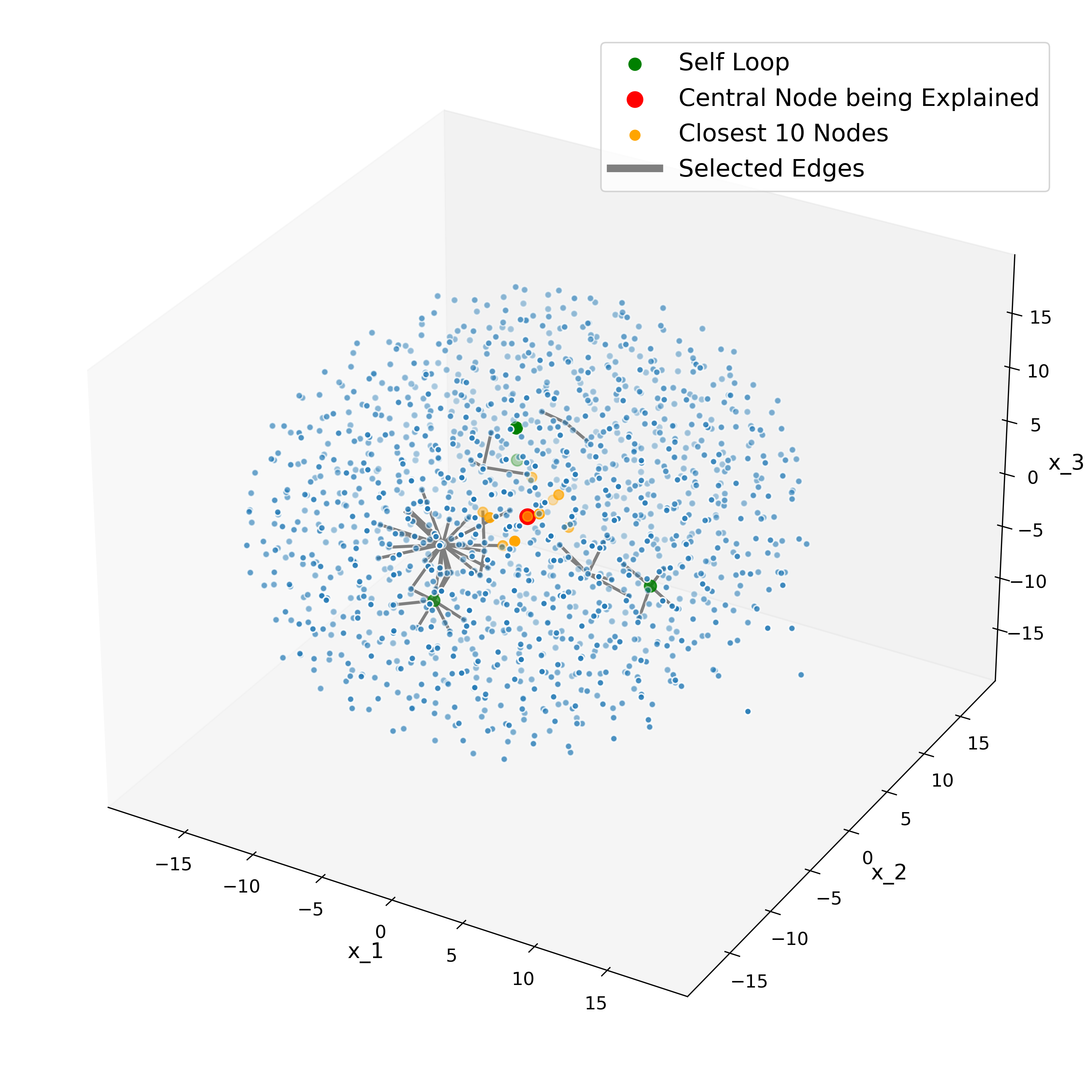}
        \label{fig:1502_global}
    \end{minipage}
    % First row, second image
    \begin{minipage}{0.46\textwidth}
        \centering
        \includegraphics[width=\linewidth]{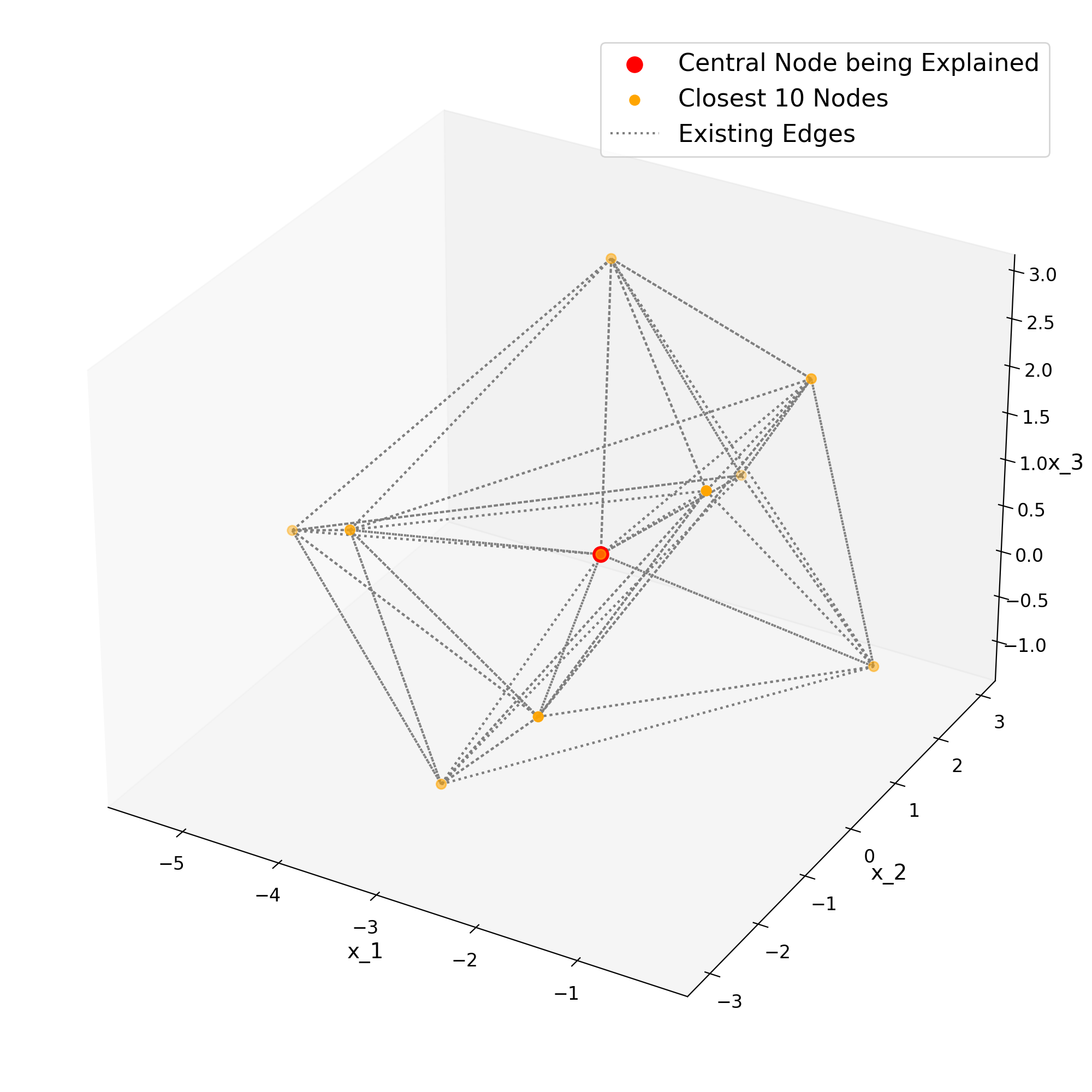}
        \label{fig:1502_local}
    \end{minipage}
    
    \vspace{-8mm}
    % Second row, first image
    \begin{minipage}{0.46\textwidth}
        \centering
        \includegraphics[width=\linewidth]{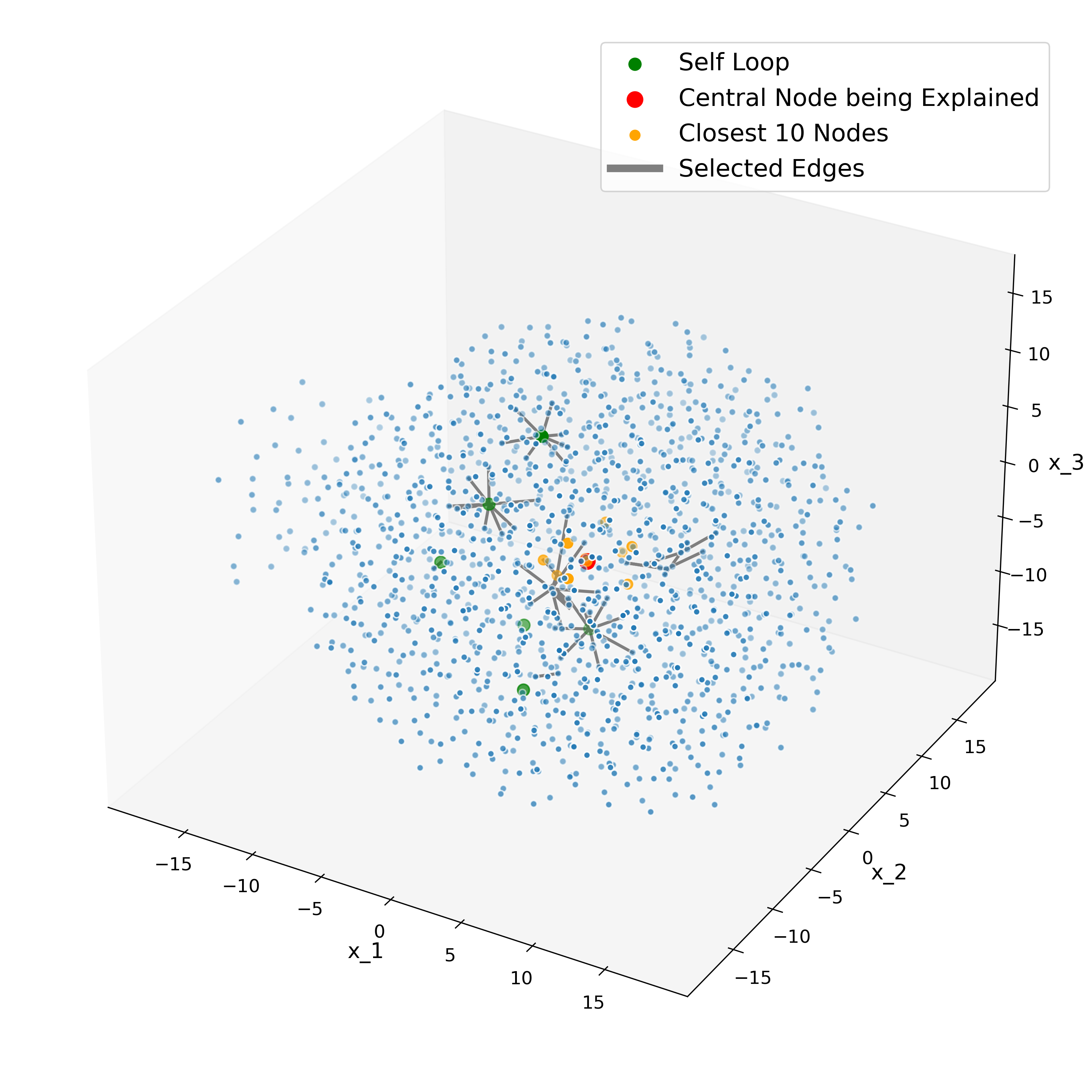}
        \label{fig:1468_global}
    \end{minipage}
    % Second row, second image
    \begin{minipage}{0.46\textwidth}
        \centering
        \includegraphics[width=\linewidth]{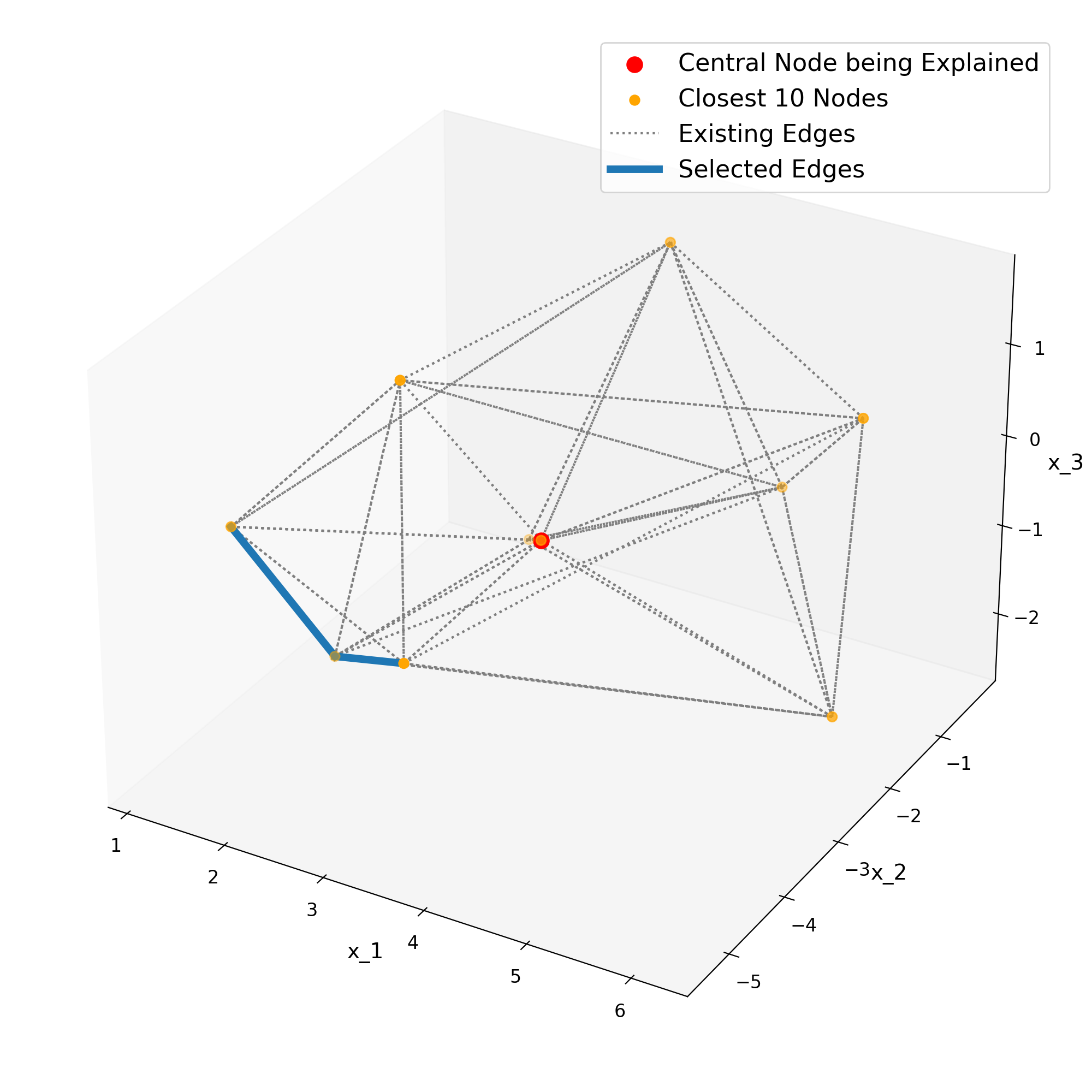}
        \label{fig:1468_local}
    \end{minipage}

    \vspace{-8mm}
    % Third row, first image
    \begin{minipage}{0.46\textwidth}
        \centering
        \includegraphics[width=\linewidth]{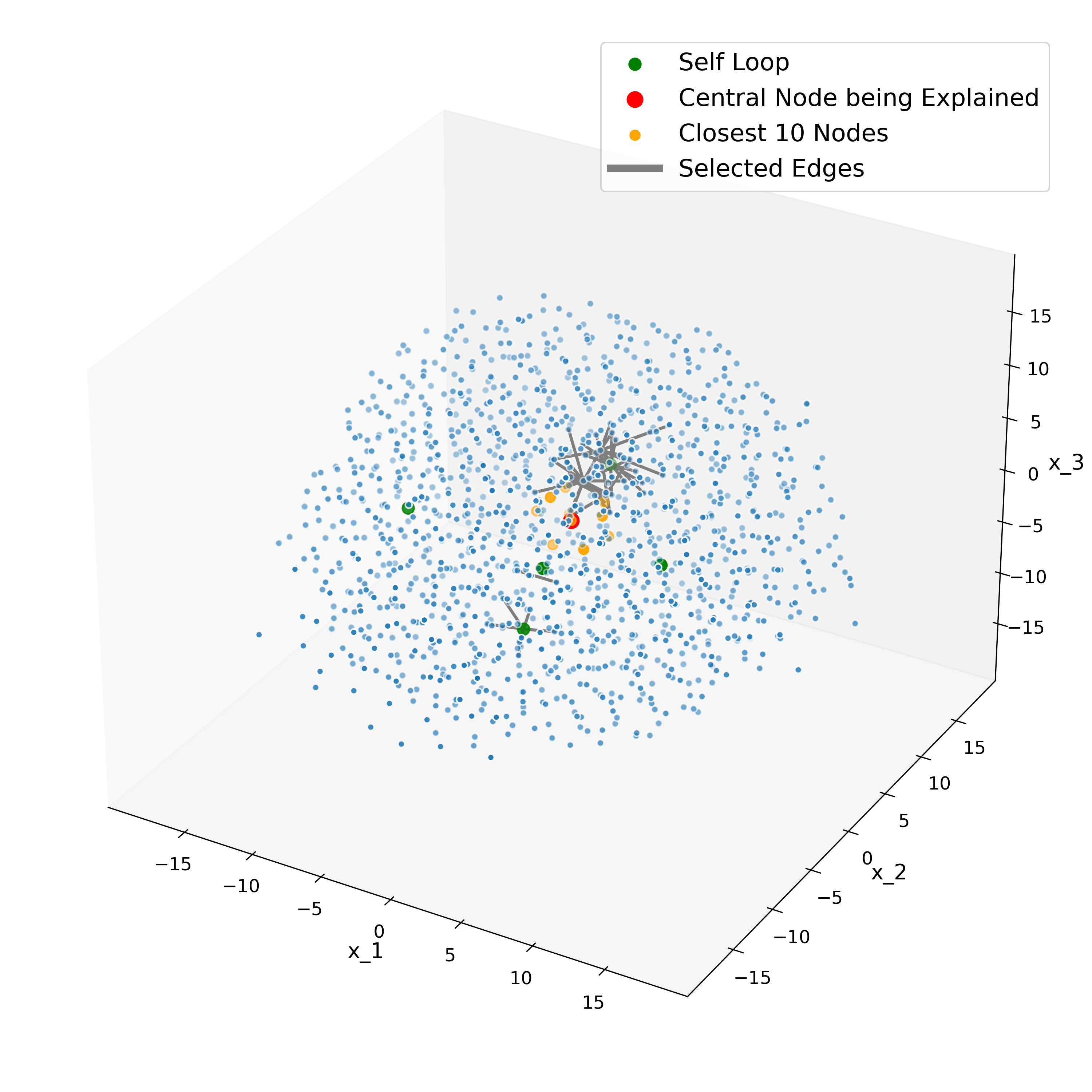}
        \label{fig:1534_local}
    \end{minipage}
    % Third row, second image
    \begin{minipage}{0.46\textwidth}
        \centering
        \includegraphics[width=\linewidth]{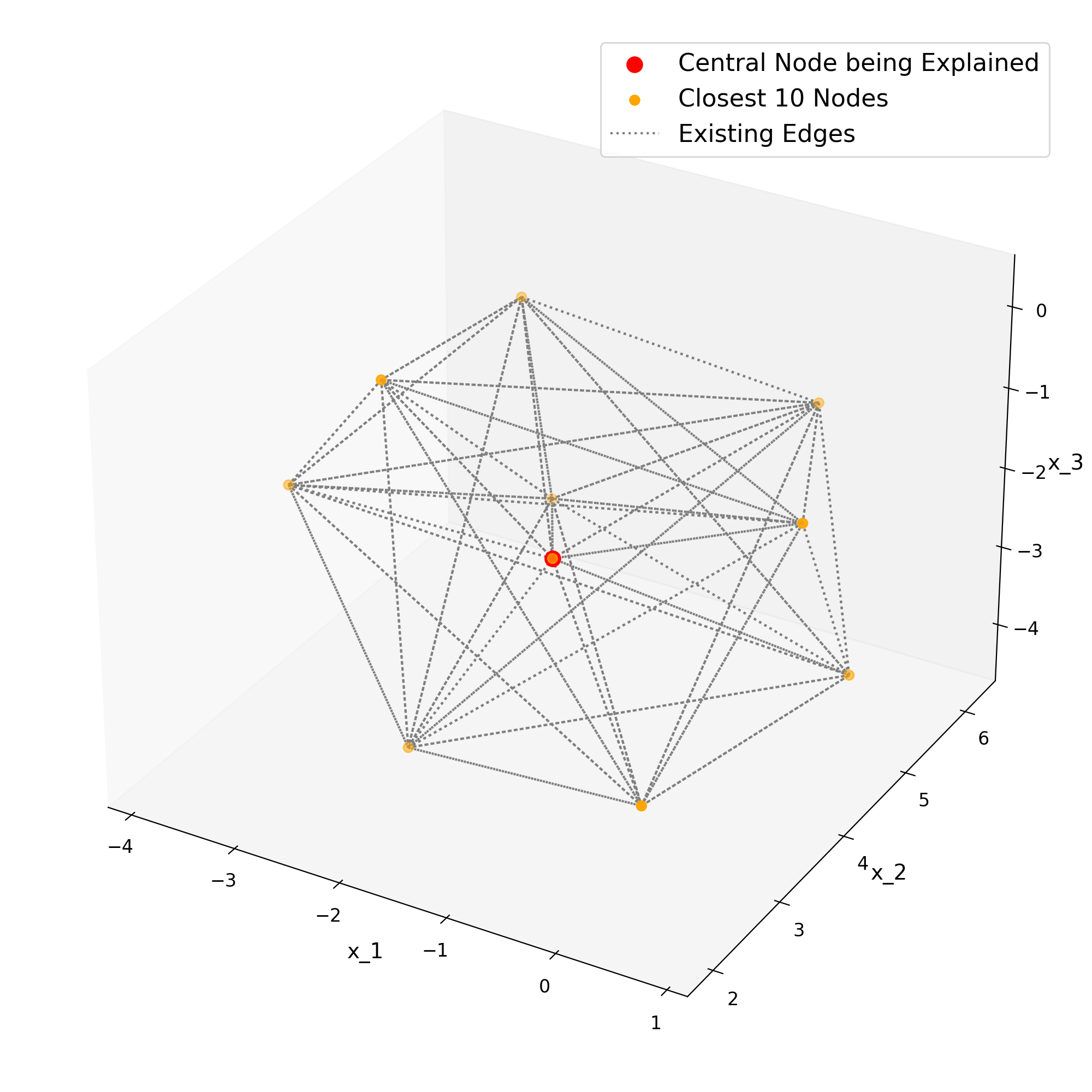}
        \label{fig:1534_global}
    \end{minipage}

    \vspace{-8mm}
    % Fourth row, first image
    % \begin{minipage}{0.44\textwidth}
    %     \centering
    %     \includegraphics[width=\linewidth]{figures/1501_local.png}
    %     \label{fig:1501_local}
    % \end{minipage}\hfill
    % Fourth row, second image
    % \begin{minipage}{0.44\textwidth}
    %     \centering
    %     \includegraphics[width=\linewidth]{figures/1501_global.png}
    %     \label{fig:1501_global}
    % \end{minipage}

    \caption{Local and global explanation visualizations for more nodes.}
    \label{fig:more_explanations}
\end{figure}

% \begin{figure}[htbp]
%     \centering
%     \includegraphics[width=0.48\textwidth]{figures/1501_global.png}
%     % \caption{Local Version of the Explanation}
%     % \label{fig:local}
%     \caption{Local explanation visualization. Notice that not many edges between the top 10 closest nodes are selected by our explanation.}
%     \label{fig:local}
% \end{figure}

% \subsection{Invariance and Orthogonal Transformation}
% In this section, we discuss the relationship of invariance and orthogonal transformation and provide more intuition on why we choose to focus on $\mathcal{O}_3$ invariance. In our work, and in the vast majority of equivariant/invariant ML research work we know, invariance implicitly refers to E(3) invariance, which encompasses rotations, reflections, translations, and their combinations. By focusing on orthogonal transformations, we specifically target O(3), which includes rotations, reflections, and their combinations. This is because, in our approach, we center the data coordinates, effectively handling the translation component of E(3). Therefore, achieving O(3) invariance in our model implicitly ensures E(3) invariance. Similarly, in the theory of Lie groups, as E(n) is a semidirect product of O(n) and the group of translations, the study of the E(n) is essentially reduced to the study of O(n)~\cite{hall2013lie}.

%%%%%%%%%%%%%%%%%%%%%%%%%%%%%%%%%%%%%%%%%%%%%%%%%%%%%%%%%%%%%%%%%%%%%%%%%%%%%%%
%%%%%%%%%%%%%%%%%%%%%%%%%%%%%%%%%%%%%%%%%%%%%%%%%%%%%%%%%%%%%%%%%%%%%%%%%%%%%%%

\end{document}